\newtheorem{theorem}{Theorem}[section]
\newtheorem{definition}[theorem]{Definition}
\newtheorem{corollary}[theorem]{Corollary}
\newtheorem{lemma}[theorem]{Lemma}
\def\x{\sigma^{\rm x}}
\def\y{\sigma^{\rm y}}
\def\z{\sigma^{\rm z}}
\def\1{\mathbbm{1}}
\def\cL{{\cal L}}
\def\ii{\mathrm{i}}
\def\dot#1#2{{\langle #1 | #2 \rangle}}
\def\tit#1{{#1}, }
\def\etal#1{{#1}}
\begin{document}
\title{\sc \large Translationally invariant conservation laws of local Lindblad equations}

\author{Marko \v Znidari\v c}
\address{
Department of Physics, Faculty of Mathematics and Physics,
  University of Ljubljana, Ljubljana, Slovenia}

\author{Giuliano Benenti}
\address{
CNISM and Center for Nonlinear and Complex Systems,
  Universit\`a degli Studi dell'Insubria, Via Valleggio 11, 22100
  Como, Italy}
\affiliation{Istituto Nazionale di Fisica Nucleare, Sezione di Milano, 
via Celoria 16, 20133 Milano, Italy}

\author{Giulio Casati}
\address{
CNISM and Center for Nonlinear and Complex Systems,
  Universit\`a degli Studi dell'Insubria, Via Valleggio 11, 22100
  Como, Italy}
\affiliation{Istituto Nazionale di Fisica Nucleare, Sezione di Milano, 
via Celoria 16, 20133 Milano, Italy}

\date{\today}

\begin{abstract}
We study the conditions under which one can conserve local translationally invariant operators by local translationally invariant Lindblad equations in one-dimensional rings of spin-$1/2$ particles. We prove that for any $1$-local operator (e.g., particle density) there exist Lindblad dissipators that conserve that operator, while on the other hand we prove that among $2$-local operators (e.g., energy density) only trivial ones of the Ising type can be conserved, while all the other can not be conserved, neither locally nor globally, by any $2$- or $3$-local translationally invariant Lindblad equation. Our statements hold for rings of any finite length larger than some minimal length determined by the locality of Lindblad equation. These results show in particular that conservation of energy density in interacting systems is fundamentally more difficult than conservation of $1$-local quantities. 
\end{abstract}

\maketitle
\tableofcontents

\section{General motivation}
Closed-system evolution dictated by Hamiltonian equations is often an idealization. Systems of interest are typically coupled to external degrees of freedom either on purpose, or because of an inherent unavoidable noise from the environment. Describing the evolution of such systems is in general complicated. Some general conditions should always hold: every quantum evolution has to preserve the trace and positivity of the density matrix. Requiring furthermore that a 
trivially extended evolution is positive also on a larger space leads to the notion of completely positive trace-preserving maps. If such maps depend on a continuous parameter, usually time, and have a semi-group property, meaning that a map for time $t$ can be decomposed into maps for shorter time-steps, then these maps are said to form a dynamical semi-group. It has been shown~\cite{Lindblad,GKS} that every dynamical semi-group is a solution of the Lindblad master equation.

While the description by the Lindblad equation is not the most general one, it is certainly the simplest mathematically consistent master equation generating positive evolution that can be used to describe the dynamics of open quantum systems. Equations of the Lindblad type have in fact been used in physics even before the general formulations of Lindblad and of Gorini, Kossakowski, and Sudarshan
(for instance in laser physics~\cite{Laser} or nuclear magnetic resonance~\cite{NMR}). Despite such rich history, the interest in understanding the properties of the Lindblad equation for systems with many particles is only recently 
beginning to emerge. Indeed, until last couple of years, investigations mainly focused on few-particle systems, like a two-level atom coupled to an electromagnetic field, a system of paramount importance in quantum optics~\cite{Breuer}. Only with recent advances in experimental techniques as well due to new interest coming from condensed and statistical physics, the focus is beginning to
  shift towards many-body systems. Of special interest are the so-called steady (or stationary) states that solve the Liouville equation $\cL(\rho)=0$ and to which the dynamics converges after long time ($\cL$ and $\rho$ are the Liouvillian and the density operator, respectively). It has been for instance shown that dissipative dynamics described by the Lindblad equation can be used to prepare entangled quantum states~\cite{Kraus:08} or to perform universal quantum computation, provided one can control dissipation described by Lindblad operators, driving the system to a steady state where the outcome of the computation is 
encoded~\cite{Verstraete:09}. General properties and conditions for steady states, or more generally for attracting subspaces of Lindblad equations have been studied~\cite{Baumgartner:08,Ticozzi:08}. Especially simple, and thereby well understood, are the so-called dark states -- that is, pure stationary states that are zero eigenstates of each Lindblad operator separately~\cite{Kraus:08} and that can be obtained by local Lindblad equations~\cite{Ticozzi:12}. In the general case of mixed steady states it is known how to construct Lindblad dissipators that lead to a given mixed steady state~\cite{Baumgartner:08} (see also the explicit construction in~\cite{Prosen:09}). Dissipators obtained in such a way are in general non-local. 
On the other hand, a setup with local Lindblad operators,
although rigourously justified only in specific regimes~\cite{Benenti09},
is computationally very 
convenient when investigating thermalization, local equilibrium,  
and transport properties 
of many-body quantum systems. 
Moreover, the setup with local Lindblad operators provides a 
very general paradigm for investigating open many-body quantum systems
with coherent bulk dynamics and incoherent boundary conditions. 
Such approach might find applications in a variety of physical problems,
including the question under what conditions a quantum 
system can be controlled locally, that is, acting on a 
small subsystem only~\cite{Giovannetti}.

In the present work we shall address the question which local translationally invariant operators can be conserved by translationally invariant local Lindblad equations (steady states, in general mixed ones, are special cases of such 
conserved operators). Therefore, we are looking for local translationally invariant conservation laws of local translationally invariant Lindblad equations on finite rings of spin-$1/2$ particles. 

This problem,
besides its fundamental importance, namely, understanding what can and what can not be done with certain classes of Lindblad evolution, has also more 
practical uses. Lindblad equations can be used to study properties of nonequilibrium states, for instance transport far from equilibrium. In such setting 
a desirable tool are dissipators, conserving a given local quantity 
whose transport one wants to study, that though change transport properties. For instance, the dephasing Lindblad operator conserves the 1-body local magnetization (i.e., particle density) and can be used to induce diffusive transport in an otherwise ballistic conductor~\cite{dephasing}. Such a dissipator is very useful in theoretical investigations as one can change transport properties of magnetization from ballistic to diffusive simply by changing the dephasing strength. It would be desirable to have an analogous dissipator that would conserve also other local quantities, for instance local energy density, or even better, both energy density and magnetization. In this way one could independently tune transport properties of energy as well as of magnetization and in doing so obtain for instance a system with high thermomagnetic (or thermoelectric, in 
models with charged particles) efficiency~\cite{ZT}. There is one fundamental difference between the energy and magnetization conservation, which as we shall see, turns out to be very important: for interacting systems energy is usually a 2-body operator while magnetization is a 1-body operator. Therefore, while magnetization densities at different sites commute, in general energy densities at neighboring sites do not. As a consequence, finding energy conserving dissipators is fundamentally more difficult and, as we shall show, local energy-conserving dissipators exist only for the simplest interactions.

Another context in which conservation laws of Lindblad equations have been discussed is in relation to the black hole information paradox. The problem whether information is in fact lost in black holes remains open. In any case, if information is lost, then the theory of quantum gravity can not be unitary. As a consequence, pure states would be allowed to evolve into mixed states~\cite{Hawking}. One simple possibility to describe such evolution, preserving positivity and 
normalization of states, is the Lindblad equation. Here evolution by the Lindblad equation would be an intrinsic one, replacing the Schr\" odinger equation, and not derived after tracing over an external environment. It has been argued that such local Lindblad evolution is incompatible with energy conservation~\cite{Banks,Oppenheim}.
Therefore, conservation of local quantities rigorously studied in lattice systems in the present work is of interest also for very basic considerations in other
fundamental contexts.
  
\section{Formulation of the problem}
We consider the Lindblad equation~\cite{Lindblad,GKS} 
(we set the effective Planck constant $\hbar=1$),
\begin{equation}
\frac{{\rm d}\rho}{{\rm d}t}=\ii [ \rho,H ]+ \cL^{\rm dis}(\rho)=\cL(\rho).
\label{eq:Lin}
\end{equation}
The dissipator can be expressed in a non-diagonal form
\begin{equation}
\cL^{\rm dis}(\rho)=\sum_{j,k} \gamma_{j,k} \left( [L_j\rho,L_k^\dagger]+[L_j,\rho L_k^\dagger]\right),
\label{eq:dis}
\end{equation}
where $L_k$ form an orthogonal operator basis (e.g., for a 2-site $\cL^{\rm dis}$ on spin-$1/2$ particles we have $15$ linearly independent traceless basis operators $\sigma^{\alpha_1} \otimes \sigma^{\alpha_2}$). Hermitian matrix of coefficients $\gamma$ should be non-negative in order to generate a dynamical semi-group. Diagonalizing the structure matrix $\gamma_{j,k}$ we can equivalently write the dissipator in a diagonal form $\cL^{\rm dis}(\rho)=\sum_{j} [L_j\rho,L_j^\dagger]+[L_j,\rho L_j^\dagger]$ (with operators $L_k$ different from those in Eq.~(\ref{eq:dis})). The propagator $\cL$ of the Lindblad equation is called the Liouvillian.

A steady solution of the Lindblad equation is an operator $A$ for which $\cL(A)=0$. We also say that such an operator is conserved by $\cL$ because ${\rm d}A/{\rm d}t=\cL(A)=0$ holds. We shall focus on $r$-local 
(i.e., acting non-trivially on $r$ consecutive sites only)
translationally invariant $\cL$ and study under what conditions is it possible to find a translationally invariant $r$-local Liouvillian $\cL$ (that can contain $r$-local Hamiltonian as well as $r$-local dissipation), so that a given $r$-local Hermitian translationally invariant $A$ is a steady state. That is, writing $A=\sum_j a_j$ and $\cL=\sum_j \cL_j$, where $a_j$ and $\cL_j$ act non-trivially only on $r$ consecutive sites, we want to have
\begin{equation}
\cL(A)=\sum_{j,k} \cL_j(a_k)=0,
\label{eq:SS}
\end{equation}
for a spin-1/2 chain of any length and with periodic boundary conditions. 
Such Lindblad evolution can be said to {\em globally conserve} the $r$-local ``charge'' $a$. We shall specifically focus on 1- or 2-local operators $A$, having in mind conservation of operators like magnetization and energy.

We shall also consider the stronger condition of {\em local conservation}, meaning that
\begin{equation}
\cL_j(a_k)=0,\quad \forall j,k.
\label{eq:sumL}
\end{equation}

We always consider periodic boundary conditions, i.e., rings, so that translational invariance (TI) is exact. An exact conservation (global or local) of some local operator $a_j$ on a ring implies an almost conservation (up-to boundary terms) in a system with open boundaries. Note that there is always a solution with the purely dissipative part $\cL^{\rm dis}$ equal to zero and the Hamiltonian density equal to $a_j$ (or to a function of $a_j$). We are not interested in such trivial solutions; we consider the general, open system case with nonzero dissipative contribution in $\cL_j$.

\section{Summary of results}
We solve the case of 1-local operators by providing conditions under which a given $a_j$ can be globally or locally conserved with an $r$-local Liouvillian $\cL_j$. We also solve the case of conserving more than one linearly independent 1-local $A$. For conservation of 2-local operators we provide a complete picture of local conservation for any $r$-local $\cL_j$, and in the case of global conservation for $2$-local and $3$-local $\cL_j$.

In more precise terms, if the conserved density $a_j$ is a 1-site operator Theorem \ref{th:1Q1Q} states that one can always find a 1-site dissipator that locally conserves that quantity. It is though not possible to find a 1-local Liouvillian that would conserve two linearly independent 1-site densities. Moreover, we prove that it is not possible to locally conserve two linearly independent 1-site operators by any local Liouvillian. However, all $\x,\y,\z$ can be globally conserved by a 2-local translationally invariant Liouvillian (Theorem \ref{th:1Qglob}).    

If $A$ is a 2-local operator Theorem \ref{th:loc2Q} says that local conservation with local Liouvillians is possible if and only if $a_j$ is spanned by $\{\1,u \}\otimes \{\1,w\}$, where $u,w$ are arbitrary 1-site operators. This means that local energy conservation is possible iff the interaction $a_j$ is of the Ising model type (in a longitudinal magnetic field), but is not possible for any other, for instance, for the transverse Ising or the Heisenberg type of $a_j$. 
Theorem \ref{th:2Q2Q} deals with global conservation of a 2-site $a_j$ and $\cL_j$. Although global conservation is at first sight much less restrictive, it turns out that using a 2-site Liouvillian one can conserve only interactions of the type that can be conserved already locally. The same holds also for 3-site Liouvillians.    

\section{Generalities}
We shall consider a one-dimensional lattice of $n$ sites with periodic boundary conditions, i.e., a ring with $n$ sites, each site having two degrees of freedom (a qubit). Any operator $B$ on $n$ sites can be expanded in a product local basis, for instance, taking the basis of Pauli matrices, we have $B=\sum_{\bm{\alpha}} c_{\bm{\alpha}}\, \sigma^{\bm{\alpha}}$, where $\sigma^{\bm{\alpha}}=\prod_{j=1}^n \sigma^{\alpha_j}_j$, and $\bm{\alpha}$ is a vector of length $n$ with each component being from $\alpha_j \in \{0,1,2,3\}$, with the convention $\sigma^0=\x,\,\sigma^1=\y,\,\sigma^2=\z,\,\sigma^3=\1$, while $c_{\bm{\alpha}}$ are expansion coefficients. Hermitian $B$ have real expansion coefficients $c_{\bm{\alpha}}$. A standard inner product used on the space of operators, under which products of local operators form a basis, is the Hilbert-Schmidt inner product $\dot{A}{B}={\rm tr}\,(A^\dagger B)$.

\begin{definition}
An arbitrary product of local operators $\prod_{j=1}^n \sigma^{\alpha_j}_j$ (called a primitive operator) is called {\em $r$-local} iff $r$ is the maximal number of consecutive sites on which two boundary operators are not identity. An operator $B$ is called $r$-local iff it is a sum of $p$-local primitive operators with $p\le r$ (at least one primitive $r$-local term must be nonzero). An operator $B$ is called {\em exactly $r$-local} if it is a sum of only $r$-local primitive operators. An operator $B$ is said to have support on $r$ sites (also shortly that it is an $r$-site operator) iff it acts as an identity on all but $r$ consecutive sites. 
\end{definition}
For instance, $\x_1\1_2\y_3$ or $\x_1\x_2\x_3+\1_1\z_2\x_3$ are 3-local (and have support on $3$ sites), while $\x_1\1_2+\1_1\y_2$ is 1-local (also exactly $1$-local, and has support on $2$ sites). 

We shall also use short notation $b$ for an operator with support on sites $0,\ldots,r-1$, while we denote by $b_j$ the same operator acting on sites $j,\ldots,j+r-1$, that can be obtained from $b$ by a translation, $b_j= T^j(b)$, where $T$ is a translation operator for one site, for instance, $T(\x_1\y_2)=\x_2\y_3$. We state some general lemmas that shall be used in subsequent sections.

\begin{lemma}
\label{lem:sum0}
Let $b$ be an operator that has support on $r$ sites (beware that $b$ is not necessarily $r$-local; it is though a sum of at most $r$-local operators; for instance, $b=\x_0\1_1+\1_0\x_1$ has support on $r=2$ sites, i.e., is a 2-site operator, even-though it is exactly 1-local). A translationally invariant sum $S=\sum_{j=1}^n b_j$ (with periodic boundary conditions) is zero iff $b$ is a linear combination of terms of the form
\begin{equation}
\sum_{k=0}^{r-p} c_k T^k(d),
\label{eq:sum0}
\end{equation}
where $d$ is a $p$-local primitive operator with $1 \le p < r$, and $c_k$ are expansion coefficients that sum to zero, $\sum_{k=0}^{r-p}c_k=0$.
\end{lemma}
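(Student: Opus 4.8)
The plan is to treat the two implications separately and, for the more substantial ``only if'' part, to decompose the sum according to the orbits of the translation $T$ acting on primitive operators; throughout I assume the ring is longer than the support, $n>r$, the regime in which the characterization is clean. The ``if'' direction is immediate: if $b=\sum_{k=0}^{r-p}c_k\,T^k(d)$ with $\sum_{k=0}^{r-p}c_k=0$, then, reindexing the full-period sum over the ring, $\sum_{j=1}^n T^{j+k}(d)=\sum_{j=1}^n T^j(d)$ for each fixed $k$, whence
\[
S=\sum_{j=1}^n T^j(b)=\Bigl(\sum_{k=0}^{r-p}c_k\Bigr)\sum_{j=1}^n T^j(d)=0 ,
\]
and by linearity this extends to any linear combination of terms of the form (\ref{eq:sum0}).

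For the converse I would expand $b$ in the primitive (Pauli-string) basis and sort the primitive operators into orbits under $T$: two primitives lie in the same orbit iff one is a translate of the other, and these orbits partition the whole operator basis. Since $\Phi\colon b\mapsto S=\sum_j T^j(b)$ commutes with $T$, it maps the part of $b$ supported on a given orbit back into that same orbit; hence $S$ is a sum of contributions living on pairwise disjoint orbits, and $S=0$ if and only if the contribution of each orbit vanishes on its own. It therefore suffices to analyze one orbit at a time. Choosing in each orbit the unique representative $d$ that is $p$-local with its leftmost non-identity factor on site $0$, the primitive operators of this orbit having support inside the window $\{0,\dots,r-1\}$ are exactly $T^k(d)$ for $0\le k\le r-p$, so the orbit component of $b$ is $\sum_{k=0}^{r-p}c_k\,T^k(d)$ and its image is $\bigl(\sum_{k=0}^{r-p}c_k\bigr)\sum_{j=1}^n T^j(d)$. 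The full-period sum $\sum_{j=1}^n T^j(d)$ assigns a strictly positive integer weight to each distinct translate of $d$, and distinct primitives are Hilbert--Schmidt orthogonal, so it is nonzero; the orbit contribution therefore vanishes precisely when $\sum_{k=0}^{r-p}c_k=0$. For an orbit whose representative is already $r$-local ($p=r$) only $k=0$ survives and the condition forces $c_0=0$, consistent with the restriction $p<r$ in (\ref{eq:sum0}). Assembling these per-orbit conditions exhibits $b$ as a linear combination of terms of the form (\ref{eq:sum0}).

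I expect the one genuinely delicate point to be the orbit bookkeeping that relies on $n>r$: one must check that, with the window a proper arc of the ring, no wrap-around translate of $d$ falls inside it and no two of the in-window translates $T^0(d),\dots,T^{r-p}(d)$ coincide. The latter would require $d$ to be invariant under a shift of size at most $r-p$, whereas any nontrivial period of a span-$p$ primitive on the ring is at least $n-p+1>r-p$ (the periodic pattern places non-identity factors at site $0$ and at site $n-\tau$, forcing the period $\tau\ge n-p+1$); this is exactly where the hypothesis $n>r$ is used and where a small ring would spoil the clean statement. Everything else reduces to the observation that translation acts orbit-wise, decoupling $S=0$ into the independent scalar constraints $\sum_k c_k=0$, one per orbit.
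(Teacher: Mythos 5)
Your overall route --- expand $b$ in the primitive basis, decompose into $T$-orbits, and reduce $S=0$ to one scalar constraint $\sum_k c_k=0$ per orbit --- is in substance the paper's own argument reorganized (the paper projects $S$ onto the translates $d_j$ of each primitive and reads off $\sum_{k=0}^{r-p}\dot{d_k}{b}=0$, which is exactly your per-orbit constraint), and your ``if'' direction matches the paper's. However, there is a genuine gap in the step you yourself flag as delicate: you assert, but never prove, that no wrap-around translate of $d$ has all its non-identity sites inside the window $\{0,\dots,r-1\}$; your period argument only excludes the \emph{other} failure mode, coincidences among $T^0(d),\dots,T^{r-p}(d)$. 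Under your standing assumption $n>r$ the unproven assertion is in fact false, and with it the lemma's characterization itself. Take $n=5$, $r=4$, and $d=\x_0\1_1\x_2$ (so $p=3$): the translate $T^3(d)=\x_3\x_0$ has its non-identity sites $\{0,3\}$ entirely inside the window, so the in-window orbit members are $k=0,1,3$, not just $k\le r-p=1$. Consequently $b=\x_0\1_1\x_2\1_3-\x_0\1_1\1_2\x_3$ has $S=\sum_j b_j=0$ (both full-period sums run over the same five pairs of $\mathbb{Z}_5$, since distance $2$ and distance $3$ coincide on a $5$-ring), yet $b$ is orthogonal to every term of the form (\ref{eq:sum0}) with $p<r$, because $\x_0\1_1\1_2\x_3$ has window-span $4=r$ while all constituents of the allowed terms have window-span at most $3$. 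So your claimed regime $n>r$ is not the clean one.

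The correct threshold is $n\ge 2r-2$: a wrapped translate $T^k(d)$ of a $p$-local $d$ with $p<r$ has $k\ge n-p+1$, and its leftmost non-identity site sits (unwrapped) at $k$, so landing inside the window forces $k\le r-1$, i.e.\ $n\le r+p-2\le 2r-3$; for $n\ge 2r-2$ the window-span of every primitive supported in the window therefore equals its ring-locality and your bookkeeping goes through verbatim (your period bound $\tau\ge n-p+1>r-p$ also holds a fortiori). This is precisely the ``minimal length determined by the locality'' caveat in the abstract; note that the paper's proof is equally silent about it --- its count of overlapping terms $\sum_{k=0}^{r-p}\dot{d_k}{b}$ likewise misses wrapped overlaps on short rings --- so your proposal's defect is not the approach but the concrete claim that $n>r$ suffices together with the missing one-line wrap-around estimate. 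With the hypothesis strengthened to $n\ge 2r-2$ and that estimate inserted, your proof is complete and essentially equivalent to the paper's.
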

\begin{proof} 
If $b$ is of the stated form (\ref{eq:sum0}) we have $S=\sum_{j=1}^n b_j=\sum_{j=1}^n \sum_{k=0}^{r-p} c_k d_{j+k}$. Fixing $j+k$ we see that the coefficient in front of $d_{j+k}$, being equal to $\sum_{k=0}^{r-p}c_k$, is zero. For the other direction of the implication: if we have $S=0$, then $S$ must be orthogonal to any operator. Let us check orthogonality of $S$ to some primitive $p$-local operator denoted by $d$. Because $S$ is a sum of $r$-site operators $b$, a non-zero 
contributions in the overlap $\dot{d_j}{S}$ can come only from terms in which $b_j$ and $d_k$ have non-zero support on the same sites, resulting in the condition $0=\dot{d_j}{S}=\sum_{k=0}^{r-p} \dot{d_k}{b}$. Writing $b$ as a finite sum 
of orthogonal primitive operators and denoting $\dot{d_k}{b}=c_k$, we see $b$ is a linear combination of sums like $\sum_{k=0}^{r-p} c_k d_k$, where $d_k=T^k(d)$ (a linear combination for different $d$'s), while the condition from previous sentence means that $\sum_{k=0}^{r-p}c_k=0$ should hold for any $d$.
\end{proof}
Cases of special importance to us are: (i) a translationally invariant (TI) sum of non-zero 1-site operator $b$ can never be zero (trivially, according to Lemma \ref{lem:sum0}, $p$ should be less than $1$), (ii) a TI sum of an operator $b$ with support on 2 sites ($r=2$) can be zero iff $b$ is of the form $b=\1_0{w}_1-{w}_0\1_1$ (note the notation $w_1=T\,w_0$), where ${w}$ is an arbitrary 1-site operator (for $r=2$ having $p=1$ is the only choice and the two coefficients $c_k$ must be, up-to an overall factor, $\pm 1$); (iii) for $r=3$ we have a possibility $p=1$, in which case $b=c_0 w_0\1_1\1_2+c_1 \1_0 w_1\1_2+c_2 \1_0\1_1w_2$ with $c_0+c_1+c_2=0$, or $p=2$, for which one must have $b=W_{01}\1_2-\1_0W_{12}$.

\begin{lemma}
\label{lem:reduced}
Let us have a general Lindblad equation ${\rm d}\rho/{\rm d}t=\cL(\rho)$ on a bipartite system, with the Hamiltonian $H_{j\mu,k\lambda}$ and the structure matrix of the Lindblad dissipator $\gamma_{j\mu,k\lambda}$ (roman/greek indices refer to the two subsystems). A reduced time-derivative of states of the form $\rho=\1_1 \otimes \sigma$, that is ${\rm tr}_1\,[\cL(\1_1 \otimes \sigma)]$, can be written in terms of a reduced linear map $\cL^{\rm red}$ on the 2nd subsystem that is also of the Lindblad form.
\end{lemma}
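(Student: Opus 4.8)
The plan is to split the Liouvillian into its coherent and dissipative parts and to show that each, after the partial trace ${\rm tr}_1$, turns into an object of the same type acting on the second subsystem. Throughout I would fix product operator bases $\{G_j\}$ on the first and $\{F_\mu\}$ on the second subsystem, so that the Lindblad basis operators carry the composite index $L_{j\mu}=G_j\otimes F_\mu$ and the dissipator reads, in the non-diagonal form of Eq.~(\ref{eq:dis}),
\[
\cL^{\rm dis}(\rho)=\sum_{j\mu,k\lambda}\gamma_{j\mu,k\lambda}\left([L_{j\mu}\rho,L_{k\lambda}^\dagger]+[L_{j\mu},\rho L_{k\lambda}^\dagger]\right).
\]

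For the coherent part I would expand $H=\sum_a A_a\otimes B_a$ and use the elementary identity ${\rm tr}_1[H,\1_1\otimes\sigma]=[{\rm tr}_1 H,\sigma]$, which holds because $[A_a\otimes B_a,\1_1\otimes\sigma]=A_a\otimes[B_a,\sigma]$ and the partial trace merely extracts the scalars ${\rm tr}(A_a)$. Consequently the coherent contribution to ${\rm tr}_1[\cL(\1_1\otimes\sigma)]$ equals $\ii[\sigma,H^{\rm red}]$ with the Hermitian reduced Hamiltonian $H^{\rm red}={\rm tr}_1 H$; it is again a commutator and so stays of the required form.

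The dissipative part is the real content. Setting $\rho=\1_1\otimes\sigma$ and $L_{j\mu}=G_j\otimes F_\mu$, every operator product factorizes across the cut, for example $L_{j\mu}(\1_1\otimes\sigma)L_{k\lambda}^\dagger=(G_jG_k^\dagger)\otimes(F_\mu\sigma F_\lambda^\dagger)$, and ${\rm tr}_1$ replaces the first tensor factor by the scalar ${\rm tr}(G_jG_k^\dagger)=:g_{jk}$ (with ${\rm tr}(G_k^\dagger G_j)=g_{jk}$ appearing in the two remaining terms by cyclicity). Collecting the three pieces I would obtain
\[
{\rm tr}_1\big[\cL^{\rm dis}(\1_1\otimes\sigma)\big]=\sum_{\mu\lambda}\Gamma_{\mu\lambda}\left([F_\mu\sigma,F_\lambda^\dagger]+[F_\mu,\sigma F_\lambda^\dagger]\right),\qquad \Gamma_{\mu\lambda}:=\sum_{jk}g_{jk}\,\gamma_{j\mu,k\lambda},
\]
which is precisely a dissipator of the Lindblad form on the second subsystem, with jump operators $F_\mu$ and reduced structure matrix $\Gamma$.

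What remains, and this is the only step needing genuine care, is to verify that $\Gamma$ inherits the defining properties of a structure matrix: Hermiticity and positivity. Hermiticity $\overline{\Gamma_{\mu\lambda}}=\Gamma_{\lambda\mu}$ follows at once from the Hermiticity of both $\gamma$ and of the Gram matrix $g_{jk}={\rm tr}(G_jG_k^\dagger)$ after relabelling $j\leftrightarrow k$. For positivity I would exploit that $g$, being a Gram matrix, is positive semidefinite: writing its spectral decomposition as $g_{jk}=\sum_a u^a_j\,\overline{u^a_k}$, the quadratic form $\sum_{\mu\lambda}\overline{v_\mu}\,\Gamma_{\mu\lambda}\,v_\lambda$ rearranges into $\sum_a\sum_{j\mu,k\lambda}\overline{w^a_{j\mu}}\,\gamma_{j\mu,k\lambda}\,w^a_{k\lambda}$ with $w^a_{j\mu}=\overline{u^a_j}\,v_\mu$, a sum of non-negative terms by positivity of $\gamma$. (If the $G_j$ are chosen orthogonal, as for the Pauli basis, $g$ is diagonal and this collapses to the statement that $\Gamma$ is a non-negative combination of the principal submatrices $(\gamma_{j\mu,j\lambda})_{\mu\lambda}$ of the positive matrix $\gamma$.) Adding the coherent part then exhibits ${\rm tr}_1[\cL(\1_1\otimes\sigma)]=\ii[\sigma,H^{\rm red}]+\cL^{\rm red,dis}(\sigma)=:\cL^{\rm red}(\sigma)$ as a bona fide Lindblad generator on the second subsystem. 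I expect the partial-trace bookkeeping to be routine and the positive semidefiniteness of $\Gamma$ to be the crux of the argument.
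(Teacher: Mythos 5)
Your proposal is correct and takes essentially the same approach as the paper: expand the Lindblad operators in a product basis, factorize each term across the tensor cut so that ${\rm tr}_1$ contracts $\gamma$ against the first-subsystem data, and conclude that the reduced generator $\ii[\sigma,H^{\rm red}]+\sum_{\mu\lambda}\Gamma_{\mu\lambda}([F_\mu\sigma,F_\lambda^\dagger]+[F_\mu,\sigma F_\lambda^\dagger])$ is Lindbladian because $\Gamma$ inherits positivity from $\gamma$. The only (harmless) difference is that the paper fixes an \emph{orthogonal} product basis, so your Gram matrix $g$ is diagonal, $\Gamma$ reduces to $\gamma^{\rm red}={\rm tr}_1(\gamma)$ up to normalization, and positivity follows immediately from the fact that partial tracing preserves positivity --- your explicit spectral-decomposition argument for general $g$ proves the same fact in slightly greater generality.
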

\begin{proof}
Taking an orthogonal product basis of Lindblad operators $L_{j\mu}=L'_{j}\otimes L_\mu$, and evaluating trace over the 1st subspace, ${\rm tr}_1({\rm d}\rho/{\rm d}t)$, we get
\begin{equation}
{\rm tr}_1\frac{{\rm d}(\1_1 \otimes \sigma)}{{\rm d}t}=\ii [ \sigma,H^{\rm red}]+ \sum_{\mu,\lambda} \gamma^{\rm red}_{\mu,\lambda}( [L_\mu \sigma,L_\lambda^{\dagger}]+ [L_\mu ,\sigma L_\lambda^{\dagger}]),
\end{equation}
where $H^{\rm red}={\rm tr}_1(H)$ and $\gamma^{\rm red}={\rm tr}_1(\gamma)$. Because partial tracing preserves positivity, $\gamma^{\rm red}$ is also non-negative and therefore the RHS is of the Lindblad form.
\end{proof}

\section{One-local operators}
Here we consider $1$-local Hermitian operators $a_j$ with support on $1$ site. We shall first consider the case when $\cL_j$ is also a $1$-site operator (meaning that each Lindblad operator as well as each term in $H$ has support on a single site).

\subsection{One-site dissipators}
As a side remark we recall that for any single-site operator $w$ it is known~\cite{Prosen:09} how to construct a purely dissipative single-site $\cL_j$, so that $\cL_j(w_j)=0$. For that construction the steady state is nondegenerate, i.e., $w_j$ is the only operator $a_j$ for which $\cL_j(a_j)=0$. Observe though that, due to a TI of the problem we consider, this does not yet guarantee that Eq.~(\ref{eq:SS}) is satisfied. In fact, with such construction Eq.~(\ref{eq:SS}) is generically never satisfied because the terms $\cL_j(\1_j)$ occurring in Eq.~(\ref{eq:SS}) are nonzero as $w$ is the only stationary state.

Going now to our problem, Eq.~(\ref{eq:SS}), and taking into account that $a_j$ and $\cL_j$ are $1$-site operators, we have $\sum_{j\neq k}\cL_j(\1_j)a_k+\sum_k \cL_k(a_k)=0$. Due to trace preservation $\cL_j(\1_j)$ must always be orthogonal to $\1_j$. If $\cL_j(\1_j)$ would be nonzero the part with the sum $\sum_{j\neq k}$ could never sum to zero, because for non-identity $a_k$ the sum of a 2-site $\cL_j(\1_j)a_k+a_j\cL_k(\1_j)$ could never be zero. We therefore conclude that one must have $\cL_j(\1_j)=0$, that is, $\1$ must be a stationary state, i.e., the induced quantum channel must be unital. We shall see that this can be viewed as a special case of a more general condition on unitality given by Lemma \ref{lem:unitalGen}. In addition, the second sum then implies that $a_k$ must also be a stationary state. Demanding global conservation (\ref{eq:SS}) of a $1$-site operator $a_k$ the Liouvillian $\cL_k$ must have at least a doubly degenerate steady state ($\1$ and $a$ must 
 be in the kernel), i.e., $a$ and $\1$ must in fact be locally conserved.

Finding such 1-site Liouvillian is actually easy. Taking a single Lindblad operator $L=\x$ (sometimes called a dephasing) and $H=0$ we see that steady states are spanned by $\{ \x,\1 \}$. Using local unitary rotations we can transform that subspace to any other 1-site operator $a$. What is more, we can see that $a$ is conserved locally (\ref{eq:sumL}), not only globally (\ref{eq:SS}). One can always find a $1$-site $\cL_j$ that locally conserves any $1$-site operator $a$.

One might wonder if it is possible to construct a 1-site Liouvillian for which one would have a steady state subspace of dimension $3$. That is, in addition to $\1$ and $\x$ such $\cL_j$ would also preserve say $\y$ (with a 3 dimensional subspace of steady states $\{\1,w,w'\}$ we can always orthogonalize it and with local unitary rotation bring it to $\{\1,\x,\y\}$. The following Lemma gives a negative answer.
\begin{lemma}
\label{lem:1xy}
A 1-site Liouvillian can have at most a two dimensional subspace of stationary states. In particular, a 1-site Liouvillian that would have a three dimensional kernel (spanned for instance by $\{\1,\x,\y\}$) does not exist.
\end{lemma}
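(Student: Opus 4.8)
The plan is to pass to the Bloch (Pauli) representation and read off the kernel dimension from the real $4\times4$ matrix representing $\cL$ in the operator basis $\{\1,\x,\y,\z\}$. Since $\cL$ preserves hermiticity, its kernel is spanned by Hermitian operators, so it suffices to work with real Bloch vectors. Writing $\rho=\tfrac12(c_0\1+\bm r\cdot\bm\sigma)$, trace preservation forces the image to be traceless, so $\cL$ acts as $(c_0,\bm r)\mapsto(0,\,c_0\bm\kappa+G\bm r)$, where $\bm\kappa$ is the inhomogeneous (non-unital) Bloch shift and $G$ is the $3\times3$ real Bloch generator. A short count then gives $\dim\ker\cL=4-\mathrm{rank}\,[\bm\kappa\,|\,G]$, so the whole statement reduces to proving $\mathrm{rank}\,[\bm\kappa\,|\,G]\ge2$, for which it is enough to show $\mathrm{rank}\,G\ge2$ whenever the dissipative part is nonzero.

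Second, I would compute $G$ explicitly from Eq.~(\ref{eq:dis}) with the single-site basis $L_j\in\{\x,\y,\z\}$ and structure matrix $\gamma$. The Hamiltonian contributes only an antisymmetric piece $A_H$ (a rotation generator), while the dissipator contributes a symmetric piece $G^{\rm dis}\propto \mathrm{Re}\,\gamma-\mathrm{tr}(\gamma)\,\1$. The essential input is complete positivity: $\gamma\succeq0$ forces $R:=\mathrm{Re}\,\gamma\succeq0$, so $G^{\rm dis}$ is negative semidefinite with eigenvalues proportional to $-(S-\rho_i)$, where $\rho_1,\rho_2,\rho_3\ge0$ are the eigenvalues of $R$ and $S=\sum_i\rho_i$. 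The key algebraic observation is that at most one of these three decay rates can vanish: if two did, say $S-\rho_1=S-\rho_2=0$, then $\rho_2+\rho_3=\rho_1+\rho_3=0$ with all $\rho_i\ge0$ forces $\rho_1=\rho_2=\rho_3=0$, i.e. $R=\mathrm{Re}\,\gamma=0$; since $\gamma$ is Hermitian its diagonal is then zero, and a positive-semidefinite matrix with zero diagonal vanishes, so $\gamma=0$, contradicting nonzero dissipation. Hence $\dim\ker G^{\rm dis}\le1$.

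Third, I would upgrade this bound from $G^{\rm dis}$ to the full $G=G^{\rm dis}+A_H$. If $G\bm r=0$ then $\bm r^{\rm T}G^{\rm dis}\bm r=-\bm r^{\rm T}A_H\bm r=0$, because an antisymmetric form vanishes on the diagonal; since $G^{\rm dis}\preceq0$ this forces $\bm r\in\ker G^{\rm dis}$, so $\ker G\subseteq\ker G^{\rm dis}$ and $\dim\ker G\le1$ as well. Thus $\mathrm{rank}\,G\ge2$ and $\dim\ker\cL\le2$ for any Liouvillian with nonvanishing dissipation; for a purely Hamiltonian $\cL$ one has $G=A_H$ antisymmetric, with $\dim\ker G\in\{1,3\}$, giving $\dim\ker\cL\in\{2,4\}$ and the value $4$ only in the trivial case $\cL=0$. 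In particular the three-dimensional kernel $\{\1,\x,\y\}$ can never occur: containing $\1$ forces unitality ($\bm\kappa=0$), while containing $\x,\y$ forces $\bm e_x,\bm e_y\in\ker G$, hence $\dim\ker G\ge2$, which is impossible (a nonzero dissipator gives $\dim\ker G\le1$, and an antisymmetric $A_H$ cannot have a two-dimensional kernel).

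\textbf{The main obstacle} is that positivity of the evolution alone does not suffice: the unital map $\bm r\mapsto(r_x,r_y,e^{tc}r_z)$ with $c<0$ preserves the Bloch ball yet fails complete positivity, and it would spuriously leave $\{\1,\x,\y\}$ invariant. The crux is therefore to invoke complete positivity in the sharp form $\gamma\succeq0$, which is exactly what delivers $R\succeq0$ and the ``at most one vanishing decay rate'' dichotomy; the remainder is bookkeeping in the Bloch representation.
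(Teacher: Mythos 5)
Your proposal is correct, and it proves the lemma by a genuinely different route than the paper. The paper integrates the generator to a channel $\Lambda={\rm e}^{\cL t}$, observes that a three-dimensional kernel containing $\1$ would leave an entire disk (the intersection of a plane with the Bloch ball) pointwise fixed, and then invokes the external no-pancake theorem of Ruskai, Szarek, and Werner to rule out a circle of pure outputs. You instead work infinitesimally at the generator level: in the paper's convention $\cL^{\rm dis}(\rho)=\sum_{j,k}\gamma_{j,k}(2L_j\rho L_k^\dagger-\{L_k^\dagger L_j,\rho\})$ with $L_j\in\{\x,\y,\z\}$ (identity components absorbed into $H$, consistent with the paper's traceless basis), a direct Pauli-algebra computation confirms your key formulas: the imaginary part of $\gamma$ feeds only the inhomogeneous shift $\bm\kappa$, while $G^{\rm dis}=4\left(\mathrm{Re}\,\gamma-\mathrm{tr}(\gamma)\1\right)$ is symmetric with eigenvalues $-4(\rho_j+\rho_k)$, so $\gamma\succeq 0$ gives $G^{\rm dis}\preceq 0$ with at most one vanishing eigenvalue unless $\gamma=0$; the antisymmetric Hamiltonian piece is then correctly eliminated through the quadratic form, giving $\ker G\subseteq\ker G^{\rm dis}$ and hence $\dim\ker\cL\le 2$. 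Both proofs hinge on complete positivity at the same spot, just at different levels: your ``pancake'' counterexample $\bm r\mapsto(r_x,r_y,{\rm e}^{tc}r_z)$ is exactly the positive-but-not-CP map the paper's cited theorem excludes, and in your language it corresponds to $\mathrm{Re}\,\gamma$ with a negative eigenvalue. What your approach buys is self-containedness (elementary linear algebra replaces the cited classification of qubit channels), an explicit kernel-dimension formula $\dim\ker\cL=4-\mathrm{rank}\,[\bm\kappa\,|\,G]$ valid for all cases, and an explicit treatment of the purely Hamiltonian degeneration (kernel dimension $2$, or $4$ only for $\cL=0$), a trivial case the paper implicitly excludes by assuming nonzero dissipation; what the paper's route buys is brevity and a geometric picture of why the statement is true at the level of the induced channel.
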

\begin{proof}
Taking $L=\x$ provides an example having a kernel of dimension $2$. If the kernel is of dimension $3$ there is a single state $w$ that is not from the kernel, $\cL(w)\neq 0$. By local unitaries we can bring it to $w=\mu \1+\z$ with some real $\mu$. Kernel is on the other hand spanned by $\{\x,\y,\1-\mu \z\}$. This in particular means that the induced channel $\Lambda={\rm e}^{\cL t}$, for any $t$, would map a state $\1+x\x+y\y-\mu\z$ (with any $x,y$) to itself. In the space of Bloch vectors all states on the crossection of a plane $z=-\mu$ with the Bloch ball would be stationary. This though is not possible for a completely positive map, i.e., a channel, because it would mean that there would be a circle of pure states in the output of $\Lambda$. The so-called no-pancake theorem~\cite{pancake} forbids that; pure outputs of a single qubit quantum channel can namely form either a complete Bloch sphere, a single point, two points, or there is no pure output. 
\end{proof}

\begin{theorem}
\label{th:1Q1Q}
One can always find a 1-site dissipator that locally conserves any 1-site operator. There does not exist a 1-local Liouvillian of the Lindblad form that would globally conserve TI sums of two linearly independent (non-identity) 1-site operators. 
\end{theorem}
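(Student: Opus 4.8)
\section*{Proof proposal}

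The first claim needs no fresh argument: the dephasing generator $L=\x$, $H=0$ already furnishes a $1$-site Liouvillian whose kernel is precisely $\mathrm{span}\{\1,\x\}$, and conjugating with a local unitary carries this two-dimensional kernel onto $\mathrm{span}\{\1,a\}$ for an arbitrary $1$-site operator $a$. As observed just above, this conserves $a$ termwise, hence both locally and globally, so I would simply point to that construction.

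For the non-existence claim I would argue by contradiction, assuming a single $1$-local translationally invariant Liouvillian (by definition $\cL=\sum_j\cL_j$ with each $\cL_j$ a $1$-site generator, all being translates of $\cL_0$) that globally conserves the TI sums $A=\sum_k a_k$ and $B=\sum_k b_k$. The plan is to convert each global conservation into a kernel condition on $\cL_0$, which is exactly the bookkeeping already performed before the theorem: expanding $\cL(A)=\sum_{j\neq k}\cL_j(\1_j)\,a_k+\sum_k\cL_k(a_k)$, the off-diagonal part, a sum of genuine $2$-site operators, can cancel only if $\cL_j(\1_j)=0$ (unitality), after which Lemma~\ref{lem:sum0} (a translationally invariant sum of nonzero $1$-site operators cannot vanish) forces each diagonal term $\cL_k(a_k)$ to vanish on its own. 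Applying the identical reasoning to $B$, I conclude that $\1$, $a$ and $b$ all lie in the kernel of the single-site generator $\cL_0$.

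The contradiction then comes from a dimension count against Lemma~\ref{lem:1xy}, and here lies the one point that must be handled with care. Since $\1$ is automatically conserved, adding multiples of $\1$ to $a$ or $b$ is immaterial, so the statement is only meaningful—and is to be read—when $\{\1,a,b\}$ are linearly independent (equivalently, when the traceless parts of $a$ and $b$ are independent); if instead $\1\in\mathrm{span}\{a,b\}$ the problem collapses to conserving a single nontrivial direction, which the dephasing construction already allows. In the genuinely independent case $\{\1,a,b\}$ spans a three-dimensional subspace contained in the kernel of $\cL_0$, contradicting the bound of Lemma~\ref{lem:1xy} that such a kernel is at most two-dimensional. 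I expect the bulk of the work to be precisely this unitality reduction together with the dimension count; the genuinely hard input, the no-pancake restriction underlying Lemma~\ref{lem:1xy}, is already in hand.
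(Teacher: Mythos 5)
Your proposal is correct and follows essentially the same route as the paper: the dephasing generator $L=\x$ for the positive claim, and for the negative claim the reduction of global conservation to unitality of $\cL_0$ plus $\cL_0(a)=\cL_0(b)=0$ (via the cross-term argument and the fact that a TI sum of a nonzero $1$-site operator cannot vanish), contradicting the two-dimensional kernel bound of Lemma~\ref{lem:1xy}. Your explicit caveat that the statement must be read with $\{\1,a,b\}$ linearly independent is a legitimate sharpening that the paper leaves implicit (its proof simply assumes $\dot{a}{b}=0$, which by itself does not exclude $\1\in\mathrm{span}\{a,b\}$, e.g. $a=\x+\1$, $b=\x-\1$, where the unital dephasing dissipator indeed conserves both, as you note).
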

\begin{proof}
An explicit solution to the first statement is given by the Lindblad operator $L=\x$ (up-to rotations). Second negative statement is a consequence of the fact that demanding $\sum_{j,k} \cL_j(a_k)=0$ and $\sum_{j,k} \cL_j(b_k)=0$, where $\dot{a_k}{b_k}=0$, forces $\cL_j$ to be unital and also $\cL_j(a_j)=\cL_j(b_j)=0$, which is, by Lemma ~\ref{lem:1xy}, not possible.
\end{proof}

\subsection{Many-site dissipators}
Can one improve by allowing the Liouvillian $\cL_j$ to be an $r$-site operator with $r>1$, and have it conserve more than one $1$-site operator? For local conservation the answer is no.
\begin{lemma}
\label{lem:1Qloc}
It is not possible to locally conserve two (or more) linearly independent non-identity TI 1-local operators with a TI $r$-local Lindblad equation (for any $r \ge 1$).
\end{lemma}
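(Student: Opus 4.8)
The plan is to collapse any hypothetical $r$-local solution onto a single site and then invoke Lemma~\ref{lem:1xy}. Suppose for contradiction that a TI $r$-local Lindblad equation $\cL=\sum_j\cL_j$ locally conserves two linearly independent non-identity $1$-local densities, $\cL_j(a_k)=\cL_j(b_k)=0$ for all $j,k$. The first step is to show that each $\cL_j$ is unital on its support. Fixing $\cL_0$ with support on the block $\{0,\dots,r-1\}$ and choosing, on a ring of length $n\ge r+1$, a site $k$ outside this block, the translate $a_k$ acts as the identity on the support of $\cL_0$; hence $\cL_0(a_k)=\cL_0(\1_{\rm supp})\otimes a_k$, and since local conservation makes this vanish while $a_k\neq0$, we conclude $\cL_0(\1_{\rm supp})=0$. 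Thus $\1$ is locally conserved as well, and subtracting the (now conserved) identity components of $a,b$ lets me take $a,b$ traceless, so that $\{\1,a,b\}$ spans a three-dimensional space of single-site operators.

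The second step reduces $\cL_0$ to a single site. I pick any site $s$ in the support, trace out the remaining $r-1$ sites, and apply Lemma~\ref{lem:reduced}: the resulting map $\cL_0^{\rm red}$ on site $s$ is again of Lindblad form. The point is that the single-site translate $a_s$, read as an operator on the support of $\cL_0$, is exactly of the product form $\1\otimes\tilde a$ demanded by that lemma, where $\tilde a$ is $a$ living on site $s$. Therefore $\cL_0^{\rm red}(\tilde a)={\rm tr}_{\rm rest}[\cL_0(a_s)]=0$, and in the same way $\cL_0^{\rm red}(\tilde b)=0$, while unitality gives $\cL_0^{\rm red}(\1)={\rm tr}_{\rm rest}[\cL_0(\1_{\rm supp})]=0$. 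Hence the single-site Lindblad Liouvillian $\cL_0^{\rm red}$ annihilates the three linearly independent operators $\1,\tilde a,\tilde b$, i.e.\ has a kernel of dimension at least three. This contradicts Lemma~\ref{lem:1xy}, which allows a $1$-site Liouvillian a kernel of dimension at most two, and the contradiction holds for every $r\ge1$ (for $r=1$ no sites are traced out and Lemma~\ref{lem:1xy} applies directly).

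I expect the only delicate point to be the bookkeeping that legitimizes the reduction. One must check that the single-site copies $a_s,b_s$ really appear on the support of $\cL_0$ in the factorized form $\1\otimes\tilde a$ required by Lemma~\ref{lem:reduced}, so that the lemma both preserves the Lindblad form and faithfully transports the kernel condition $\cL_0(a_s)=0$ to $\cL_0^{\rm red}(\tilde a)=0$; and one must ensure the ring is long enough ($n\ge r+1$) to provide a non-overlapping translate $a_k$ for the unitality step, which is the origin of the minimal-length restriction mentioned in the abstract. Once these are in place, the statement follows directly from Lemmas~\ref{lem:reduced} and \ref{lem:1xy}.
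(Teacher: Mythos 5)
Your proof is correct and takes essentially the same route as the paper's: establish unitality of $\cL_j$ from local conservation, use Lemma~\ref{lem:reduced} to reduce $\cL_j$ to a single-site Lindblad map annihilating $\1$ and the two conserved densities, and conclude via Lemma~\ref{lem:1xy} that a three-dimensional kernel is impossible. The only differences are cosmetic fill-ins of steps the paper leaves implicit: you spell out the non-overlapping-translate argument for unitality (with the $n\ge r+1$ length requirement), subtract identity components instead of rotating $a,b$ to $\x,\y$ by local unitaries, and trace down to an arbitrary site of the support rather than the last one.
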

\begin{proof}
Due to local conservation we immediately see that $\cL_j$ has to be unital, $\cL_j(\1_j\cdots \1_{j+r-1})=0$. Suppose that we demand local conservation of a 1-site $a_j=\x_j+\mu \1_j$ (using local unitaries we can bring an arbitrary $w$ to such a form). Due to unitality $\cL_j$ should therefore also conserve $\x_j$ alone. A similar argument for the second independent conserved operator brings us to the conclusion that $\cL_j$ should conserve also say $\y_j$. In addition to unitality, we therefore have $\cL_j(\1_j\cdots \x_{j+r-1})=0$ and $\cL_j(\1_j\cdots \y_{j+r-1})=0$. Tracing over all but the last site we see that the reduced 1-site Liouvillian $\cL_j^{\rm red}$ (that is also of the Lindblad type due to Lemma~\ref{lem:reduced}) would have to have $\1,\x,\y$ in the kernel. This though is not possible due to Lemma \ref{lem:1xy}.
\end{proof}
This Lemma, together with Theorem \ref{th:1Q1Q}, means that with local conservation, at least as far as the dimension of the steady subspace is concerned, using larger locality than 1-local Liouvillians brings no advantage. With global conservation though things are different. It turns out that with global conservation one can conserve all 1-site operators already with a 2-local TI Liouvillian.
\begin{theorem}
\label{th:1Qglob}
With local conservation one can not conserve two (non-identity) linearly independent 1-site operators. With global conservation and 2-local $\cL_j$ one can conserve all four linearly independent 1-site operators.
\end{theorem}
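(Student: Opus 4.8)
The first assertion is not new: it is exactly Lemma~\ref{lem:1Qloc} (any $r$, in particular $r=1,2$), so I would simply invoke it. The content of the theorem is the second, constructive claim, and the plan is to exhibit a single explicit $2$-local TI purely dissipative Liouvillian that conserves all of $\x,\y,\z$ at once.

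The candidate I would use takes $H=0$ and one two-site Lindblad operator per bond, the projector onto the two-site spin singlet,
\begin{equation}
L_{j,j+1}=P^{(j,j+1)}_{\rm s}=\tfrac{1}{4}\left(\1-\x_j\x_{j+1}-\y_j\y_{j+1}-\z_j\z_{j+1}\right),
\end{equation}
with $\cL=\sum_j \cL_{j,j+1}$ and, in the diagonal form of the dissipator in Eq.~(\ref{eq:dis}), $\cL_{j,j+1}(\rho)=2P_{\rm s}\rho P_{\rm s}-\{P_{\rm s},\rho\}$ (using $P_{\rm s}=P_{\rm s}^\dagger=P_{\rm s}^2$). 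I would establish two properties. First, \emph{unitality}, $\cL_{j,j+1}(\1)=2P_{\rm s}-2P_{\rm s}=0$. Second, the key algebraic identity that, because the singlet carries total spin zero, every component of the total block spin annihilates $P_{\rm s}$ from both sides,
\begin{equation}
(\sigma^\alpha_j+\sigma^\alpha_{j+1})\,P_{\rm s}=P_{\rm s}\,(\sigma^\alpha_j+\sigma^\alpha_{j+1})=0,\qquad \alpha\in\{{\rm x},{\rm y},{\rm z}\}.
\end{equation}

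Then I would assemble the global sum, Eq.~(\ref{eq:SS}). By unitality, for any site $k$ outside the bond $\{j,j+1\}$ one has $\cL_{j,j+1}(a_k)=\cL_{j,j+1}(\1)\,a_k=0$, so only the two on-bond terms survive and $\sum_{j,k}\cL_{j,j+1}(a_k)=\sum_j \cL_{j,j+1}(a_j+a_{j+1})$. For $a\in\{\x,\y,\z\}$ the bracket $a_j+a_{j+1}$ is a total block-spin component, so the second identity makes both $P_{\rm s}(a_j+a_{j+1})P_{\rm s}$ and $\{P_{\rm s},a_j+a_{j+1}\}$ vanish; hence $\cL_{j,j+1}(a_j+a_{j+1})=0$ for each bond and the total is zero, simultaneously for all three directions, while $\1$ is conserved by unitality. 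I would also check that this is genuinely global and not local (consistent with the first half of the theorem): a short calculation gives $\cL_{j,j+1}(\z_j)\neq0$, so no single-site charge lies in the kernel of an individual bond, yet the ring sum cancels; the dissipative part is manifestly nonzero, as required. The argument needs only that each bond has a site outside it and that translations generate distinct bonds, i.e.\ any ring length $n\ge 3$.

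The conceptual crux — the step I expect to be the real obstacle, rather than the bookkeeping — is conserving three non-commuting densities with one dissipator. This is forced by placing $L$ inside the $SU(2)$-invariant subspace of two-site operators (spanned by $\1$ and $\vec\sigma_j\cdot\vec\sigma_{j+1}$): covariance under global rotations $u\otimes u\otimes\cdots$ ties the three components together, so that conserving one total-spin component entails conserving all three. The singlet projector realizes this most cleanly because it projects precisely onto the spin-$0$ sector, which is exactly what yields the second identity; any other element of the invariant subspace would demand a separate and less transparent check. Once this choice is made, everything reduces to the elementary fact that a spin-zero state is killed by the total spin.
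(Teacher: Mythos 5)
Your proof is correct and is essentially the paper's own construction in disguise: since $\x_0\x_1+\y_0\y_1+\z_0\z_1=\1-4P_{\rm s}$ and shifting a Hermitian Lindblad operator by a real multiple of the identity leaves the dissipator unchanged, your singlet-projector dissipator coincides (up to an overall positive factor) with the paper's choice $L=\x_0\x_1+\y_0\y_1+\z_0\z_1$, and both arguments rest on the same two facts — unitality annihilates all off-bond terms, and the symmetric combinations $\sigma^{\alpha}_j+\sigma^{\alpha}_{j+1}$ lie in the kernel of the bond dissipator. Your $SU(2)$/total-spin packaging merely makes the per-bond cancellation $\cL_j(a_j+a_{j+1})=0$ immediate, where the paper instead exhibits the kernel (\ref{eq:Heis}) and the telescoping action $\cL_0(\x_0\1_1)=-8(\x_0\1_1-\1_0\x_1)$ explicitly.
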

\begin{proof}
First part of Theorem is proved in Lemma \ref{lem:1Qloc}. Second part is shown by an explicit construction. Let us take a TI 2-local dissipator with a single Lindblad operator $L=\x_0\x_1+\y_0\y_1+\z_0\z_1$. It is easy to check that the steady states of such 2-site $\cL_0$ are spanned by all symmetric operators (10 in number)
\begin{equation}
\{\x_0\x_1,\x_0\y_1+\y_0\x_1,\y_0\y_1,\x_0\z_1+\z_0\x_1,\y_0\z_1+\z_0\y_1,\z_0\z_1,\x_0\1_1+\1_0\x_1,\y_0\1_1+\1_0\y_1,\z_0\1_1+\1_0\z_1,\1_0\1_1 \}. 
\label{eq:Heis}
\end{equation}
In addition, one has $\cL_0(\x_0\1_1)=-8(\x_0\1_1-\1_0\x_1)$ and similarly for $\y$ and $\z$. Due to (\ref{eq:Heis}) one also has $\cL_j(\x_j\1_{j+1}+\1_j\x_{j+1})=0$, and similarly for $\y$ and $\z$. Also, $\cL_{j-1}(\x_j\1_{j+1}+\1_j\x_{j+1})=\cL_{j-1}(\1_{j-1}\x_j)=-8(\1_{j-1}\x_j-\x_{j-1}\1_j)$, and as a consequence, taking $a=\x$ the sum in Eq.~(\ref{eq:SS}) is zero. The same holds also for $\y_j\1_{j+1}+\1_j\y_{j+1}$ and $\y_j\1_{j+1}+\1_j\y_{j+1}$. Such $\cL_j$ therefore globally conserves all 4 primitive 1-site operators $\x,\y,\z$ and $\1$.
\end{proof}
For 1-local TI operators global conservation with a 2-local Liouvillian is therefore more powerful than local conservation with an arbitrary TI $r$-local $\cL_j$.

\section{Two-local operators - local conservation}
Here we demand local conservation given by Eq.~(\ref{eq:sumL}), which is obviously stronger condition than global conservation. We also consider $2$-site operators $a_k$ resulting in a 2-local $A=\sum_j a_j$. Doing an operator Schmidt decomposition of $a$ we can write it as
\begin{equation} 
a=\sum_{j=1}^r u^j_0\, w^j_1,
\label{eq:schmidt}
\end{equation}
where $u^j_0$ are orthogonal 1-site operators on site 0, while $w^j_1$ are orthogonal on site 1. The rank $r$ is at most 4. Note that, due to translational invariance of $A$, we can always choose all $1$-local terms in $a$ to be symmetric with respect to two sites.

\subsubsection{Ising-like operator}
Using local unitaries any rank $1$ operator $a_j$ can be brought, up-to an irrelevant prefactor, to a form $a_j=(\x_j+\mu\1_j) (\x_{j+1}+\mu'\1_{j+1})$ (choosing symmetric 1-local terms we can in fact take $\mu=\mu'$). Taking a single 2-site Lindblad operator $L=\x_0\x_1$ and no unitary part, $H=0$, one can see that the stationary states of the corresponding 2-site dissipator $\cL_0$ are spanned by operators $\{\x_0\x_1,\x_0\1_1,\1_0\x_1,\y_0\y_1,\z_0\z_1,\y_0\z_1,\z_0\y_1,\1_0\1_1 \}$. Eq.(\ref{eq:sumL}) is therefore satisfied and $L=\x_0\x_1$ locally conserves any 2-site $a_j$ spanned by $\{\x_0\x_1,\x_0\1_1,\1_0\x_1,\1_0\1_1\}$, i.e., any Ising nearest-neighbor coupling in an optional longitudinal field. Such $a_j$ are in fact not necessarily of rank $1$, they can also be of rank $2$, like for instance $\x_0\x_1+\1_0\1_1$. Observe that other basis states, like e.g. $\y_0\z_1$, are not locally conserved by $L=\x_0\x_1$ because $\cL_{j-1}(\y_j \z_{j+1})=\cL_{j-1}(\1_{j-1}\y_j
 )\z_{j+1}\neq 0$.

\subsubsection{Other operators}
Let us first discuss $2$-site Liouvillians. We consider $a_j$ that are not from ${\rm span}\{\x_0\x_1,\x_0\1_1,\1_0\x_1,\1_0\1_1\}$. They have the Schmidt decomposition of the form $a_j=(\x_j+\mu\1_j)(\x_{j+1}+\mu'\1_{j+1})+\sum_{k=2}^4 u^k_j w^k_{j+1}$. By local unitary rotation we can furthermore rotate $u^2_j$ to $u^2_j=\mu \x_j-\1_j+\lambda \y_j$ with some real $\lambda$. For $a_k$ that is not Ising-like $\lambda$ is always nonzero (to see that, we choose among $u^{2,3,4}_j$ the one not from ${\rm span}\{\x_j,\1_j\}$, and rotate it around the $x$-axis). Looking at local conservation $\cL_j(a_k)=0$, such that sites $(j,j+1)$ do not overlap with sites $(k,k+1)$, we conclude that one must have $\cL_j(\1_j\1_{j+1})=0$. For local conservation the dissipator has to be unital (this in fact holds for any $r$-site $\cL_j$ and any $p$-site $a_j$). Taking $j=k-1$ in Eq. (\ref{eq:sumL}) we get $\sum_{k=1}^4 \cL_j(\1_j u^k_{j+1})w^k_{j+2}=0$. Because operators $w^k_{j+2}$ are orthogonal, each term in the sum has to be separately zero, 
$\cL_j(\1_j u^k_{j+1})=0$. Taking $u^1_{j+1}=(\x_{j+1}+\mu\1_{j+1})$ and using unitality we get that $\cL_j(\1_j\x_{j+1})=0$. Demanding also that $\cL_j(\1_j (\mu \x_{j+1}-\1_{j+1}+\lambda \y_{j+1}))=0$, we conclude that for $\lambda \neq 0$ we must in addition have $\cL_j(\1_j\y_{j+1})=0$. Tracing over site $j$ we get the reduced Liouvillian for which $\cL^{\rm red}(\x)=\cL^{\rm red}(\y)=\cL^{\rm red}(\1)=0$. Because the reduced dissipator $\cL^{\rm red}$, giving a mapping of states of form $\1_j w_{j+1}$, again has to be of the Lindbladian form (Lemma~\ref{lem:reduced}), and such three times degenerate stationary state is according to Lemma~\ref{lem:1xy} not possible, we conclude that Eq.(\ref{eq:sumL}) can not be fulfilled for a 2-site operator $a$, unless $\lambda=0$ (in which case $a$ is linear combination of $\{ \x_j\x_{j+1},\1_j\x_{j+1},\x_j\1_{j+1},\1_j\1_{j+1}\}$, up-to local unitaries, for which Lindblad operator $
 L=\x_j\x_{j+1}$ works).

We can actually see that the above argument readily generalizes to TI $r$-local Lindblad operators for any $r>2$. For an infinite system size one first concludes that $\cL_j$ must be unital, and then, similarly as above, that $\cL_j(\1_j\cdots\1_{j+p-2}\x_{j+p-1})=\cL_j(\1_j\cdots\1_{j+p-2}\y_{j+p-1})=0$. The reduced 1-qubit channel would therefore have to have a triply degenerate stationary state, which is not possible. No matter on how many sites the Liouvillian $\cL_j$ acts one can never have $\cL_j(a_k)=0$ for a 2-site $a_k$ that is not from $\{\x,\1\}^{\otimes 2}$. The above finding can be summarized in the following theorem.
\begin{theorem}
\label{th:loc2Q}
A 2-site operator $a_j$ can be locally conserved by an $r$-site Liouvillian $\cL_j$ (finite $r\ge 2$), $\cL_j(a_k)=0,\, \forall j,k$, iff $a_j$ is of the Ising type, that is, if up-to local unitary rotations $a_j$ is linear combination of operators $\x_j\x_{j+1}$, $\1_j\x_{j+1}$, $\x_j\1_{j+1}$, and $\1_j\1_{j+1}$.
\end{theorem}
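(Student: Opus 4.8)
The statement is an ``iff'', so the plan is to treat the two implications separately: the sufficiency direction by exhibiting an explicit dissipator, and the necessity direction by a reduction argument that ultimately forces an impossible single-qubit kernel. For the ``if'' direction I would take the single Lindblad operator $L=\x_0\x_1$ with $H=0$. Since $L$ is Hermitian with $L^\dagger L=\1$, the dissipator collapses to $\cL_0(\rho)\propto \x_0\x_1\,\rho\,\x_0\x_1-\rho$, whose kernel is exactly the set of operators commuting with $\x_0\x_1$; counting the Pauli products $\sigma^{a}_0\sigma^{b}_1$ with an even number of factors anticommuting with $\x$ yields the eight-dimensional stationary subspace, which contains all of $\{\1,\x\}\otimes\{\1,\x\}$. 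I would then check $\cL_j(a_k)=0$ pattern by pattern: non-overlapping terms vanish by unitality, while the two single-site overlaps ($j=k\pm1$) vanish because the restriction of any Ising term to the shared site is $\x$ or $\1$, both of which commute with $\x\x$ and are therefore stationary.

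\textbf{Necessity, setup.} For the converse I would argue by contradiction, assuming local conservation of a $2$-site $a$ lying outside $\{\1,u\}\otimes\{\1,w\}$. First, choosing a block $a_k$ whose support is disjoint from that of $\cL_j$ (possible once the ring is long enough relative to $r$) and using $\cL_j(a_k)=\cL_j(\1)\otimes a_k$ forces unitality, $\cL_j(\1\cdots\1)=0$. Next I would Schmidt-decompose $a=\sum_m u^m_0\,w^m_1$ and apply local unitaries to reach a normal form in which the leading piece is Ising-like, $u^1=\x+\mu\1$, while a second orthogonal component carries a genuine transverse part, $u^2=\mu\x-\1+\lambda\y$ with $\lambda\neq0$. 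The non-Ising hypothesis is precisely what guarantees some $u^m\notin\mathrm{span}\{\x,\1\}$, so that $\lambda$ can be arranged nonzero.

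\textbf{Necessity, crux.} The decisive step is the single-site overlap. Taking $a_k$ so that it meets the support of $\cL_j$ in exactly one boundary site, the orthogonal factors $w^m$ sitting strictly outside that support decouple the relation $\cL_j(a_k)=0$ into the separate constraints $\cL_j(\1\cdots\1\,u^m)=0$ on the boundary site. Feeding in $u^1$ and $u^2$ together with unitality then gives $\cL_j(\1\cdots\1\,\x)=\cL_j(\1\cdots\1\,\y)=\cL_j(\1\cdots\1)=0$. Tracing out the remaining sites and invoking Lemma~\ref{lem:reduced}, the induced single-qubit map is again of Lindblad form and annihilates all of $\1,\x,\y$; this triply degenerate kernel is forbidden by Lemma~\ref{lem:1xy}, which is the contradiction.

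\textbf{Main obstacle.} The step I expect to require the most care is making the decoupling rigorous and confirming it survives for arbitrary locality $r>2$: one must verify that the external factors $w^m$ really live strictly outside the support of $\cL_j$ and remain linearly independent there, so that $\cL_j(a_k)=0$ genuinely splits term by term, and that the partial trace over the $r-1$ bulk sites both preserves the Lindblad structure (Lemma~\ref{lem:reduced}) and transports the three kernel conditions down to the single boundary qubit. Granting these, the no-three-dimensional-kernel obstruction of Lemma~\ref{lem:1xy} closes the argument uniformly in $r$.
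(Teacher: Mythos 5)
Your proposal is correct and follows essentially the same route as the paper: sufficiency via the explicit dissipator $L=\x_0\x_1$ (your commutant computation of its kernel is just a slicker derivation of the paper's list of stationary states), and necessity via unitality from disjoint supports, the Schmidt normal form $u^1=\x+\mu\1$, $u^2=\mu\x-\1+\lambda\y$ with $\lambda\neq 0$, decoupling through the orthogonality of the external factors $w^m$, and the reduction through Lemma~\ref{lem:reduced} to the forbidden three-dimensional single-qubit kernel of Lemma~\ref{lem:1xy}. The caveats you flag for general $r$ (overlap confined to one boundary site, linear independence of the $w^m$, partial trace preserving the Lindblad structure) are exactly the points the paper's generalization relies on, so no gap remains.
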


\section{Two-local operators -- global conservation}
Theorem \ref{th:loc2Q} is a bit disappointing as it means that any non-trivial energy density, e.g., that of the transverse Ising model, the Heisenberg model, etc., can not be locally conserved by any local TI Lindblad equation. This must be contrasted with the case of 1-site operators that can always be locally conserved already by a 1-site dissipator (Theorem \ref{th:1Q1Q}). One might think that replacing the local conservation (\ref{eq:sumL}) by the global conservation (\ref{eq:SS}) will increase the set of $2$-site $a_k$ that can be conserved. Somewhat unexpectedly though, as we shall show in this Section, this is not the case. Even under global conservation $\sum_{j,k}\cL_j(a_k)=0$ a solution for 
2- and 3-site $\cL_j$ exists only for the Ising-like $a_k$.

\subsection{Constraint on unitality} 
We shall first derive a constraint on unitality, i.e., on the image of the identity, that the global conservation puts on local $\cL_j$, in particular on a $2$-local $\cL_j$. 
\begin{lemma}
\label{lem:unital}
Having global conservation $\cL(A)=0$, with a 2-local TI $A=\sum_j a_j$, and a 2-local TI $\cL=\sum_j \cL_j$, for a system of any length, imposes a divergence-like condition
\begin{equation}
\cL_j(\1_j\1_{j+1})=w_j\1_{j+1}-\1_j w_{j+1},\qquad \dot{w}{\1}=0,
\label{eq:unital}
\end{equation}
where $w$ is some $1$-site operator.
\end{lemma}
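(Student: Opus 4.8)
The plan is to work entirely in the Pauli product basis and expand both the conserved density and the image of the identity. Write $a=\sum_{\alpha,\beta}c_{\alpha\beta}\,\sigma^\alpha_0\sigma^\beta_1$ and $\cL_0(\1_0\1_1)=\sum_{\alpha,\beta}g_{\alpha\beta}\,\sigma^\alpha_0\sigma^\beta_1$, with indices in $\{0,1,2,3\}$ and $\sigma^3=\1$. Trace preservation of $\cL_0$ gives $g_{33}=0$, so $\cL_0(\1_0\1_1)$ is automatically traceless (note also that $\ii[\1_0\1_1,H_0]=0$, so only the dissipator feeds this object). The goal is then to show two things: that $g_{\alpha\beta}=0$ whenever both $\alpha,\beta\in\{0,1,2\}$ (no genuine two-site part), and that $g_{\alpha3}=-g_{3\alpha}$ (antisymmetry of the one-site parts). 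Together these are exactly the stated divergence form, with $w=\sum_{\alpha}g_{\alpha3}\sigma^\alpha$.

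The central idea is to test $\cL(A)=\sum_{j,k}\cL_j(a_k)=0$ against primitive operators supported on two regions separated by more than the interaction range, which forces the ring to be at least some minimal length so that such configurations exist. The point of the separation is that only disjoint pairs $(j,k)$ can reach both regions, and for a disjoint pair $\cL_j$ does not touch $a_k$, so $\cL_j(a_k)=\cL_j(\1_j\1_{j+1})\otimes a_k$ factorizes; this is what lets one read off $\cL_j(\1\1)$ cleanly. First I would probe with $P=\sigma^\alpha_p\sigma^\beta_{p+1}\,\sigma^\gamma_q\sigma^\delta_{q+1}$, two well-separated two-site blocks with all four indices in $\{0,1,2\}$. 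The only $\cL_j(a_k)$ whose support covers both blocks are $(j,k)=(p,q)$ and $(j,k)=(q,p)$, and disjointness makes the coefficient of $P$ equal to $g_{\alpha\beta}c_{\gamma\delta}+c_{\alpha\beta}g_{\gamma\delta}$, which must vanish for all such indices. Since $A$ is genuinely two-local there is a pair $(\gamma_0,\delta_0)$ with $\gamma_0,\delta_0\in\{0,1,2\}$ and $c_{\gamma_0\delta_0}\neq0$; putting $(\alpha,\beta)=(\gamma,\delta)=(\gamma_0,\delta_0)$ forces $g_{\gamma_0\delta_0}=0$, after which the relation collapses to $g_{\alpha\beta}c_{\gamma_0\delta_0}=0$, i.e.\ $g_{\alpha\beta}=0$ for all $\alpha,\beta\in\{0,1,2\}$. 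Hence $\cL_0(\1_0\1_1)=u_0\1_1+\1_0 v_1$ for traceless one-site $u,v$.

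Next I would establish the antisymmetry with the probe $P'=\sigma^\rho_p\,\sigma^\gamma_q\sigma^\delta_{q+1}$: a single far site $p$ together with a two-site block $\{q,q+1\}$, with $\gamma,\delta\in\{0,1,2\}$. Again only disjoint pairs contribute. If the block $\{q,q+1\}$ is supplied by $\cL_j$ (that is $j=q$), the relevant component of $\cL_q(\1\1)$ is $g_{\gamma\delta}$, which already vanishes by the previous step, so these terms drop out. The surviving contributions have $k=q$, with $\cL_j(\1\1)$ producing the isolated site $p$: the choice $j=p$ contributes $g_{\rho3}=u_\rho$ and $j=p-1$ contributes $g_{3\rho}=v_\rho$, each multiplied by $c_{\gamma\delta}$. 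Vanishing of the coefficient of $P'$ gives $(u_\rho+v_\rho)\,c_{\gamma\delta}=0$, and taking $(\gamma,\delta)=(\gamma_0,\delta_0)$ yields $u_\rho+v_\rho=0$ for every $\rho$, hence $v=-u$. Setting $w=u$ delivers $\cL_j(\1_j\1_{j+1})=w_j\1_{j+1}-\1_j w_{j+1}$ with $w$ traceless, as claimed.

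The step I expect to be most delicate is the bookkeeping of which pairs $(j,k)$ contribute to a given probe, in particular verifying that overlapping pairs, which would bring in the uncontrolled genuine action of $\cL_j$ on two-site operators, can never reach two regions separated by more than three sites (an overlapping pair spans at most three consecutive sites). This is precisely why the widely separated probes are essential and why the conclusion needs the ring to exceed a minimal length. The same reasoning, applied to probes of the analogous shapes, is what will extend this unitality constraint to general $r$-local $\cL_j$ in Lemma~\ref{lem:unitalGen}.
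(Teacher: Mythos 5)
Your proof is correct and follows essentially the same route as the paper: project the global conservation law $\sum_{j,k}\cL_j(a_k)=0$ onto well-separated probe operators so that only disjoint pairs contribute and $\cL_j(a_k)$ factorizes as $\cL_j(\1_j\1_{j+1})\,a_k$, first killing the genuine two-site part of $\cL_j(\1\1)$ via two separated non-identity blocks, then obtaining $v=-u$ via a single-site-plus-block probe. Your only deviations are cosmetic --- explicit Pauli coordinates $g_{\alpha\beta}$, and collapsing the paper's two-case orthogonality analysis into one step by first deriving $g_{\gamma_0\delta_0}=0$ --- so the argument matches the paper's proof in substance.
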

\begin{proof}
Suppose we have $\cL_0(\1_0\1_1)=W$ with some nonzero 2-site operator $W$. Projecting the l.h.s. in global conservation Eq.(\ref{eq:SS}) on an arbitrary operator, i.e., calculating $\langle B | \cL(A) \rangle$, we must get $0$. Let us project on two primitive 2-local operators $B_k=\sigma_k^{\alpha_k} \sigma_{k+1}^{\alpha_{k+1}}$ and $R_j=\sigma_j^{\alpha_j} \sigma_{j+1}^{\alpha_{j+1}}$ such that their support sites do not overlap, $|k-j|>2$ (all $\alpha_k,\alpha_{k+1},\alpha_j,\alpha_{j+1}\neq 3$). Projecting Eq.(\ref{eq:SS}) on $B_k R_j$, there are only two terms among $\cL_p(a_r)$ that can possibly have a non-identity operator on all four sites. Namely, either $p=k$ and $r=j$, or, $p=j$ and $r=k$. Projection therefore gives a condition $\dot{B_k}{a_k}\dot{R_j}{W_j}+\dot{B_k}{W_k}\dot{R_j}{a_j}=0$, that holds for all allowed sets of four $\alpha \neq 3$. Because $a_j$ is 2-local (and 2-site) operator there must be at least one primitive 2-local operator $U$ (a direct product of two non-identity operators) that is not orthogonal 
to $a_j$, $\dot{U_j}{a_j}\neq 0$. Choosing $B=R=U$ we immediately see that $W$ is orthogonal to $U$. Taking any other primitive 2-local operator $\sigma_0^{\alpha_0}\sigma_1^{\alpha_1}$ there are two possibilities: i) such term is orthogonal to $a_j$. In this case we take $B=U$ and $R=\sigma_0^{\alpha_0}\sigma_1^{\alpha_1}$, concluding that $\dot{R}{W}=0$; ii) it is not orthogonal to $a_j$, in which case we take $B=R=\sigma_0^{\alpha_0}\sigma_1^{\alpha_1}$, concluding again that $\dot{R}{W}=0$. $W$ is therefore orthogonal to all primitive 2-local operators and must be of the form $W=\1_0 w_1+w_0'\1_1$. Trace preservation also imposes that $w$ and $w'$ are both orthogonal to $\1$. Projecting now Eq.~(\ref{eq:SS}) on a 1-site $B_k=u_k\1_{k+1}$ and $R_j=U_j$, we get $\dot{U_j}{a_j}(\dot{B_k}{W_{k-1}}+\dot{B_k}{W_k})=0$, giving in turn $\dot{u_k}{w_k}+\dot{u_k}{w_k'}=0$ for any non-identity $u_k$. We conclude that $w=-w'$. 
\end{proof}
Lemma \ref{lem:unital} is very useful because it enables us to evaluate one sum in a double sum in the global conservation Eq.(\ref{eq:SS}), reducing it to a single sum,
\begin{equation}
\sum_j \cL_j(a_j)+\cL_{j-1}(a_j)+\cL_{j+1}(a_j)+a_j w_{j+2}-w_{j-1} a_j=0.
\label{eq:sum00}
\end{equation}
From now on this equation will serve us as a staring point for global conservation. Another consequence of Lemma \ref{lem:unital} is the following corollary.
\begin{corollary}
\label{cor:11}
Let $A$ be a TI 2-local operator, $A=\sum_j a_j$, that is globally conserved under TI 2-local Liouvillian $\cL$, $\cL(\sum_j a_j)=0$. Then a shifted operator $\tilde{a}_j=a_j+\mu \1_j\1_{j+1}$, with an arbitrary $\mu$, is also conserved under the same $\cL$, $\cL(\sum_j \tilde{a}_j)=0$. As a consequence, it is enough to consider global conservation of $2$-site operators $a_j$ that are orthogonal to the identity. 
\end{corollary}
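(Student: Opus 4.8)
The plan is to collapse the added term over the ring and reduce everything to the single fact $\cL(\1)=0$. First I would note that each $\1_j\1_{j+1}$ is just the full identity on the $n$-site ring, so that $\sum_{j=1}^n \1_j\1_{j+1}=n\1$ and hence $\sum_j \tilde a_j = A + \mu n\1$. By linearity of the Liouvillian, $\cL(\sum_j \tilde a_j)=\cL(A)+\mu n\,\cL(\1)$; the first term vanishes by hypothesis, so the whole claim reduces to showing $\cL(\1)=0$.

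For that step I would write $\cL(\1)=\sum_j \cL_j(\1_j\1_{j+1})$ and invoke Lemma \ref{lem:unital} --- which applies precisely because $A$ is globally conserved under the given 2-local $\cL$ --- to replace each summand by the divergence-like expression $w_j\1_{j+1}-\1_j w_{j+1}$ with a fixed traceless 1-site operator $w$. This is exactly, up to an overall sign, the 2-site form $\1_0 w_1 - w_0\1_1$ of case (ii) of Lemma \ref{lem:sum0}, whose translationally invariant sum over the ring vanishes. Therefore $\cL(\1)=0$ and $\cL(\sum_j \tilde a_j)=\cL(A)=0$, establishing the first assertion.

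For the stated consequence I would split an arbitrary 2-site density into its identity part and its traceless remainder: with $\mu=\dot{\1_0\1_1}{a}/\dot{\1_0\1_1}{\1_0\1_1}$, the shifted operator $a-\mu\1_0\1_1$ is orthogonal to the identity, and by the first part it is globally conserved under $\cL$ if and only if $a$ is. Hence the subsequent global-conservation analysis may assume without loss of generality that $a$ is traceless. I do not expect a genuine obstacle here, as the statement is a direct corollary of Lemma \ref{lem:unital}; the only point requiring care is the periodic-boundary bookkeeping that makes $\sum_j (w_j\1_{j+1}-\1_j w_{j+1})$ telescope to zero, but this is already packaged in Lemma \ref{lem:sum0}, so no separate computation is needed and global conservation enters solely through the hypothesis of Lemma \ref{lem:unital}.
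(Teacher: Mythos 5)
Your proposal is correct and follows essentially the same route as the paper: linearity reduces the claim to $\cL(\1)=0$, which follows by applying Lemma~\ref{lem:unital} to each $\cL_j(\1_j\1_{j+1})$ and letting the divergence terms $w_j\1_{j+1}-\1_j w_{j+1}$ telescope to zero over the ring. Your extra bookkeeping (the factor $n\1$, the explicit $\mu$ in the orthogonalization, and the appeal to case~(ii) of Lemma~\ref{lem:sum0}) merely spells out details the paper leaves implicit.
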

\begin{proof}
The Corollary is a simple consequence of the linearity of $\cL$ and of Lemma \ref{lem:unital}. Namely, we have $\cL(\mu\1)=\mu\sum_j \cL_j(\1_j\1_{j+1})=\sum_j w_j\1-\1 w_{j+1}=0$. In other words, the Liouvillian $\cL=\sum_j \cL_j$ is unital.
\end{proof}

From the way the proof of Lemma ~\ref{lem:unital} proceeds one can see that the Lemma can be readily generalized beyond $2$-local operators.
\begin{lemma}
\label{lem:unitalGen}
Demanding global conservation $\cL(A)=0$ for an $m$-local TI $A=\sum_j a_j$ and an $n$-local TI $\cL=\sum_j \cL_j$ for a system of sufficient length ($m$ and $n$ are fixed), enforces $\cL_j$ to map the identity to a linear combination of terms of the type in Eq.~(\ref{eq:sum0}) with $p < n$.

Using Lemma \ref{lem:sum0} this in fact means that global conservation of a local TI operator imposes that $\cL=\sum_j \cL_j$ must be unital!
\end{lemma}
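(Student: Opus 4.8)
The plan is to generalize the far-separation argument used in Lemma \ref{lem:unital}. Writing $W:=\cL_0(\1_0\cdots\1_{n-1})$, the goal is to show that the translationally invariant sum $\sum_j W_j$ vanishes; by Lemma \ref{lem:sum0} applied with support size $r=n$ this is exactly the statement that $W$ is a linear combination of the telescoping terms in Eq.~(\ref{eq:sum0}) with $p<n$, and it says precisely that $\cL=\sum_j \cL_j$ is unital. Throughout I would assume $A\neq 0$, since otherwise the conservation condition is vacuous.

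First I would project the global conservation $\cL(A)=\sum_{p,r}\cL_p(a_r)=0$ onto a product $B_k R_j$ of two primitive operators whose supports are placed far apart on the ring, with separation exceeding $m+n$; this requires the ring to be long enough, which is where the hypothesis of sufficient length enters (and, since $\sum_s W_s$ has support width $n$, it suffices to use probes of width $\le n$, so the required length depends only on $m,n$). The key support-counting step is that a single term $\cL_p(a_r)$ has support contained in the union of the $n$ sites of $\cL_p$ and the $m$ sites of $a_r$: if these overlap, the union has width at most $m+n-1$ and cannot reach both far regions, so such terms drop out. Only the disjoint terms survive, and for them $\cL_p(a_r)=W_p\otimes a_r$, whose overlap with $B_k R_j$ factorizes across the two regions.

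Introducing the linear functionals $\omega(P):=\sum_s \dot{P}{W_s}=\dot{P}{\sum_s W_s}$ and $\beta(P):=\sum_s \dot{P}{a_s}=\dot{P}{A}$ on primitive operators, and summing the two ways of placing $W$ over one region and $a$ over the other, yields the single bilinear identity $\omega(R)\beta(B)+\beta(R)\omega(B)=0$, valid for all primitive $R,B$. Since $A\neq 0$ there is an anchor $U$ with $\beta(U)\neq 0$; setting $R=B=U$ gives $\omega(U)=0$, and then taking $R=U$ with $B=Q$ arbitrary gives $\beta(U)\omega(Q)=0$, hence $\omega(Q)=0$ for every non-identity primitive $Q$. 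The identity direction is handled automatically by trace preservation, $\dot{\1}{W_s}={\rm tr}\,W_s=0$. Thus $\sum_s W_s$ is orthogonal to a complete operator basis and must vanish, which is the claimed conclusion.

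The step I expect to be the main obstacle is the locality bookkeeping in the projection: carefully verifying that at sufficient separation the only surviving contributions are the disjoint $W_p\otimes a_r$ terms, that their projection onto $B_k R_j$ factorizes cleanly into the product $\dot{R_j}{W_p}\dot{B_k}{a_r}$ (plus the swapped term), and that no stray contributions arise from translates wrapping around the ring. Once the factorized identity $\omega(R)\beta(B)+\beta(R)\omega(B)=0$ is in hand, the remaining algebra is immediate. Compared with the $m=n=2$ case, this unified bilinear argument has the advantage of removing the need to treat the $n$-local and lower-local parts of $W$ in separate stages.
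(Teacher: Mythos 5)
Your proposal is correct, and its engine is the same as the paper's: project $\cL(A)=0$ onto a pair of primitive probes with far-separated supports, discard the overlapping terms $\cL_p(a_r)$ by support counting (a term has support of width at most $m+n-1$, so at sufficient separation it cannot reach both probes -- this also disposes of your worry about spectator blocks, since a disjoint pair with $a_r$ far from both probes could contribute only through the identity component of $a_r$, which would require $W_p$ alone to cover both probe patterns), and feed the resulting orthogonality conditions into Lemma \ref{lem:sum0}. Where you genuinely depart is in the packaging. The paper anchors on unsummed overlaps such as $\dot{U}{a}$ and $\dot{R_{j+r}}{W_j}$, and since the pattern of contributing translates depends on the relative sizes of $m$ and $n$, it is forced into a three-case analysis ($m=n$, $n>m$, $n<m$), each with its own staged elimination (e.g., for $n>m$ it first sets $R_j=U_j$ to kill the swap term, then treats general $R$). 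By summing over translates from the outset and working with $\omega(P)=\dot{P}{\sum_s W_s}$ and $\beta(P)=\dot{P}{A}$, you compress all three cases into the single bilinear identity $\omega(R)\beta(B)+\beta(R)\omega(B)=0$; your two-step elimination ($R=B=U$ gives $\omega(U)=0$, then $R=U$ gives $\omega(Q)=0$) is exactly the paper's staged argument in uniform disguise, and your conditions $\omega(R)=0$ for $p$-local $R$ with $p\le n$ coincide with the paper's $\sum_{r=0}^{n-p}\dot{R_{j+r}}{W_j}=0$ (for $p>n$ the functional $\omega$ vanishes trivially). You also invert the logical order: you prove unitality $\sum_s W_s=0$ first and then read off the telescoping form of $W=\cL_0(\1)$ from the ``iff'' of Lemma \ref{lem:sum0}, whereas the paper verifies the coefficient-sum conditions of that lemma directly and states unitality as the corollary; both are legitimate since the lemma is an equivalence. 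One small tightening: the anchor must be a \emph{non-identity} primitive for the separation argument to apply, which follows not from $A\neq 0$ alone but from $A$ being $m$-local in the paper's sense (it then has a nonzero exactly $m$-local primitive component); in the degenerate case $A\propto\1$ the conclusion is anyway immediate, because then $\cL(A)\propto\sum_s W_s$ and global conservation gives unitality directly.
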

\begin{proof}
Case $m=n$. Let us denote $\cL_0(\mathbbm{1}_0\cdots \mathbbm{1}_{n-1})=W$. Because $a_j$ is $m$-local it is non-orthogonal to at least one primitive $m$-local $U$. Similarly as in the proof of Lemma \ref{lem:unital}, taking $B_k=R_j=U$, we conclude that $\dot{U}{W}=0$ (an index $j$ in for instance $B_j$ indicates the smallest site index on which $B$ acts nontrivially). Then, take $B_k=U_k$ and for $R_j$ any primitive $p$-local operator ($p \le n$). Projecting global conservation on these two operators and noting that for $p<n$ $R_j$ has non-zero overlap with $W$ on different sites, we get $\dot{U_k}{a_k}\dot{R_j}{\sum_{r=0}^{n-p} T^{p-n+r} W_j}=0$, which means that $\sum_{r=0}^{n-p} \dot{R_{j+r}}{W_j}=0$. A sum of expansion coefficients of $W$ on any primitive $p$-local $R$ must be zero, which is nothing but the condition in Lemma \ref{lem:sum0}.

Case $n>m$. Take a primitive $m$-local $U$ and $B_k=U_k$, such that $\dot{U}{a}\neq 0$, and any primitive $p$-local $R_j$, with $n\ge p>m$. Projecting and noting that, because $p>m$, $B_k$ must be projected on $a$ while $R_j$ on $W$, we see that $\sum_{r=0}^{n-p} \dot{R_{j+r}}{W_j}=0$. To check the expansion of $W$ on $(p=m)$-local primitive operators we first take $R_j=U_j$, resulting in $\sum_{r=0}^{n-m} \dot{U_{j+r}}{W_j}=0$. Using this we can then see that, taking for $R_j$ any other $p$-local primitive operator, with $p\le m$, and $B_k=U_k$, results again in $\sum_{r=0}^{n-p} \dot{R_{j+r}}{W_j}=0$. The sum of expansion coefficients therefore sums to zero for any primitive operator with $p\le n$.

Case $n<m$. Take a primitive $m$-local $U$ and $B_k=U_k$, such that $\dot{U}{a}\neq 0$. Because of $m>n$ $U$ is orthogonal to $W$ and therefore, taking for $R_j$ any primitive $p$-local operator, where $p\le n$, we immediately obtain $\sum_{r=0}^{n-p} \dot{R_{j+r}}{W_j}=0$.
\end{proof}

As a consequence of Lemma, for instance, for $n=3$ one has to have 
\begin{equation}
\cL_j(\1_j\1_{j+1}\1_{j+2})=c_0 w_j\1_{j+1}\1_{j+2}+c_1\1_j w_{j+1}\1_{j+2}+c_2\1_j\1_{j+1}w_{j+2}+\sum_P c_P(P_{j,j+1}\1_{j+2}-\1_j P_{j+1,j+2}),
\label{eq:unitality3}
\end{equation}
with $c_0+c_1+c_2=0$ and the 2nd sum being over all primitive 2-local operators $P$.

\subsection{Two-site dissipators}
Here we are going to consider global conservation with $2$-local Lindblad equations. Let us first derive a useful form to which any $2$-site operator $a_j$ can be brought to by local unitary rotations (that is by $U\otimes U'$). Due to Corollary \ref{cor:11} it is enough to consider operators that are orthogonal to $\1_j\1_{j+1}$. Any such operator can be split into a purely 2-site part $\tilde{a}$ (that is a sum of primitive 2-local operators) and a 1-site 
``magnetic'' field-like term $w$, $a=\tilde{a}+w_0\1_1+\1_0 w'_1$. Making an operator Schmidt decomposition on $\tilde{a}$, we can write it as $\tilde{a}=\sum_{j=1}^3 u_0^j v_1^j$. Rotating three orthogonal operators $u^j$ and $v^j$ with local unitaries to $\x,\y$ and $\z$, and noting that for our TI operator $A$ we can always symmetrize $w_0\1_1+\1_0 w'_1$ and write it as $\frac{1}{2}((w_0+w'_0)\1_1+\1_0 (w_1+w'_1))$, we get a ``canonical'' form of 2-site operators,
\begin{equation}
a=\x_0\x_1+\mu \y_0\y_1+\nu \z_0\z_1+w_0\1_1+\1_0w_1,\quad w=h_x \x+h_y \y+h_z\z.
\label{eq:canonical}
\end{equation}
Without sacrificing generality we can assume that $\mu,\nu \in [0,1]$ (we rotate $u^j$ and $v^j$ with the largest weight to $\x$; if $\mu<0$ and $\nu>0$ we can change the sign of $\x$ and $\z$ by local unitaries; if $\mu,\nu<0$ we change the sign of $\y,\z$, bringing both $\mu$ and $\nu$ to non-negative values, apart from an irrelevant overall factor). Magnetic field strengths $h_{x,y,z}$ are arbitrary real numbers.

We shall now show that, apart from the simplest Ising-like case for which $a_j=\x_j \x_{j+1}$ (or more generally, $a_j$ is from $\{\x,\1\}^{\otimes 2}$) it is not possible to find a TI 2-site Liouvillian fulfilling global conservation condition (\ref{eq:sum00}). Relaxing conservation from local to global we therefore do not gain anything -- with global conservation only those cases already solvable by local conservation are possible. This is summarized in the following theorem.
\begin{theorem}
\label{th:2Q2Q}
A generic 2-site interaction $a$ of the form 
\begin{equation}
\label{eq:2sitea}
a=\x_0\x_1+\mu\, \y_0\y_1+\nu\, \z_0\z_1+w_0\1_1+\1_0w_1,\quad w=h_x \x+h_y \y+h_z\z,
\end{equation}
can be globally conserved by a 2-site Lindblad superoperator $\cL_j$, $\sum_j \cL_j(A)=0,\,\, A=\sum_j a_j$, iff $\mu=\nu=h_y=h_z=0$.
\end{theorem}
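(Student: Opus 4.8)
The plan is to dispatch the easy direction first and then attack the converse by brute-force linear algebra constrained by positivity. For the ``if'' direction, when $\mu=\nu=h_y=h_z=0$ the operator reduces to $a=\x_0\x_1+h_x(\x_0\1_1+\1_0\x_1)$, which lies in $\{\1,\x\}^{\otimes 2}$; by Theorem \ref{th:loc2Q} (explicitly, the single Lindblad operator $L=\x_0\x_1$) this is even locally conserved, hence globally conserved, so nothing remains to check.

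For the ``only if'' direction I would assume a generic 2-site Liouvillian $\cL_0(\rho)=\ii[\rho,H]+\sum_{j,k}\gamma_{j,k}([L_j\rho,L_k^\dagger]+[L_j,\rho L_k^\dagger])$, with $H$ an arbitrary Hermitian 2-site operator and $\{L_j\}$ the $15$ traceless products $\sigma^{\alpha_0}\sigma^{\alpha_1}$, so that the unknowns are the real entries of $H$ together with the Hermitian PSD structure matrix $\gamma$. First I would invoke Lemma \ref{lem:unital} to fix $\cL_0(\1_0\1_1)=w_0\1_1-\1_0 w_1$ in terms of these parameters and to pass to the reduced global-conservation identity (\ref{eq:sum00}). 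Everything then lives in the single ``seed'' operator $b=\cL_{-1}(a_0)+\cL_0(a_0)+\cL_1(a_0)+a_0 w_2-w_{-1}a_0$, which is $3$-local but supported on the four sites $-1,\dots,2$, and the requirement is that the translationally invariant sum $\sum_k T^k(b)$ vanish.

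The engine of the argument is Lemma \ref{lem:sum0}: $\sum_k T^k(b)=0$ forces $b$ to be a combination of telescoping terms, which is equivalent to demanding that, for every primitive $p$-local operator $d$ with $p\le 3$, the coefficients of $b$ on the translates of $d$ sum to zero. I would expand $b$ in the Pauli product basis and impose this family of scalar conditions, organised by locality. The $3$-local projections are the most informative: the fully-supported operators $\sigma^\alpha_0\sigma^\beta_1\sigma^\gamma_2$ receive contributions only from the $3$-site images $\cL_{\pm1}(a_0)$ and from the field products $a_0 w_2$, $w_{-1}a_0$, and matching them ties the ``unwanted'' couplings $\mu,\nu$ and fields $h_y,h_z$ to specific entries of $\gamma$ and $H$. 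After collecting these linear relations, the decisive step is to feed in positivity of $\gamma$: I expect the conservation equations to force certain diagonal entries of $\gamma$ (in a suitably rotated operator basis) to vanish, whereupon $\gamma\ge 0$ drags the corresponding off-diagonal entries to zero as well, progressively collapsing the admissible dissipator down to the $\x_0\x_1$ family and leaving $\mu=\nu=h_y=h_z=0$ as the only consistent possibility. Where convenient I would replace part of this bookkeeping by the reduced-channel route: tracing out a site via Lemma \ref{lem:reduced} yields a genuine single-qubit Lindbladian, and showing that a nonzero $\mu,\nu,h_y$ or $h_z$ would force it to annihilate a three-dimensional subspace contradicts the no-pancake bound of Lemma \ref{lem:1xy}.

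The hard part is precisely that, unlike the local case of Theorem \ref{th:loc2Q}, global conservation is a single summed identity rather than a family of term-by-term equations, so one cannot isolate $\cL_0(\1_0 u_1)=0$ for free; all constraints must be squeezed out of (\ref{eq:sum00}) by projection, and the coherent ($H$) and incoherent ($\gamma$) contributions can a priori cancel each other, so the linear relations alone do not close the argument. The real obstacle is therefore the joint use of the many projection identities together with the PSD condition on the $15\times 15$ matrix $\gamma$: one must verify that no fine-tuned combination of Hamiltonian and dissipative terms can conserve a non-Ising $a$, which is a sizeable but in principle mechanical case analysis once the telescoping conditions have been written down.
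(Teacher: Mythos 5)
Your setup (the generic 2-site Liouvillian, Lemma~\ref{lem:unital}, reduction to Eq.~(\ref{eq:sum00}), and telescoping projections via Lemma~\ref{lem:sum0}) matches the paper's step (i), and you correctly diagnose why the local-conservation argument does not transfer. But both mechanisms you propose for closing the argument fail, and the actual closing idea is missing. First, the reduced-channel route (Lemma~\ref{lem:reduced} plus the no-pancake Lemma~\ref{lem:1xy}) is unavailable here for exactly the reason you yourself state earlier: global conservation never yields the exact kernel conditions $\cL_0(\1_0\x_1)=\cL_0(\1_0\y_1)=0$ --- only certain sums of matrix elements of $\cL_j$ vanish --- so you cannot exhibit a three-dimensional stationary subspace of a reduced qubit channel, and ``replacing part of the bookkeeping'' this way is inconsistent with your own earlier observation. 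Second, your expectation that the projection identities force certain diagonal entries of $\gamma$ to vanish (so that $\gamma\ge 0$ then drags off-diagonals to zero) is unsubstantiated: every projection equation mixes diagonal and off-diagonal entries of $\gamma$ with the Hamiltonian coefficients, with both signs, and no single equation or obvious subset pins a diagonal entry of $\gamma$ to zero. As you suspect, the linear relations alone do not close the argument; but ``PSD plus case analysis'' is not a proof until you produce a sign-definite certificate, and nothing in your plan produces one.

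That certificate is the paper's key step, and it is not mechanical. One forms the specific combination $E(\x\x)+\mu E(\y\y)+\nu E(\z\z)+2h_xE(\x)+2h_yE(\y)+2h_zE(\z)$ of projection equations (on $\x_0\x_1$, $\y_0\y_1$, $\z_0\z_1$, $\x_0$, $\y_0$, $\z_0$), whose weights are tuned so that all Hamiltonian coefficients cancel identically, leaving --- for a single Lindblad operator with coefficient vector $\bm{c}$ --- a quadratic identity $\bm{c}^\dagger \tilde{C}\bm{c}=0$. Since $\tilde{C}$ is not yet sign-definite, one further adds the quadratic forms coming from the unitality constraint of Lemma~\ref{lem:unital} with carefully chosen coefficients (e.g.\ $-(4h_x^2-\mu\nu)\tilde{C}^{\rm un}_{xx}$, etc.) to obtain a real symmetric $C$ that is negative-definite for all parameter values except $\mu=\nu=h_y=h_z=0$; then $\bm{c}^\dagger C\bm{c}=0$ forces $\bm{c}=0$, and several Lindblad operators just give a sum of such forms. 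In your $\gamma$-language this is precisely the statement that the conservation equations imply $\sum_{j,k}\gamma_{j,k}C_{k,j}=0$ with $C<0$, which together with $\gamma\ge 0$ forces $\gamma=0$ --- so your positivity intuition can be made rigorous, but only after the Hamiltonian-free, negative-definite combination has been found; and proving negative-definiteness of the resulting $15\times15$ matrix is itself a substantial computation (block decomposition, bounding matrices, Sylvester's criterion, Descartes' rule of signs), not routine bookkeeping. Without this combination your plan stalls at a large under-determined linear-plus-PSD system.
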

\begin{proof}
If $a$ is of the form stated in Theorem ($\mu=\nu=h_y=h_z=0$) we have already showed in Theorem \ref{th:loc2Q} that such $a$ can be locally conserved. Proof of the other direction is more involved. General idea is to show that Eq.~(\ref{eq:sum00}) is incompatible with complete-positivity of Lindbladian evolution under given constraints. 

The proof proceeds in three steps: (i) We consider a Liouvillian $\cL_j$ with a single general Lindblad operator $L=\sum_{\bm{j}} c_{\bm{j}} \sigma_0^{j_1}\sigma_1^{j_2}$ and an arbitrary unitary part (i.e., Hamiltonian). Using Lemma~\ref{lem:sum0} we write out equations obtained by projecting Eq.~(\ref{eq:sum00}) to various 1-, 2-, or 3-site operators; (ii) We sum certain equations together in order to get rid of the (linear) dependence on the unitary part of $\cL_j$, thereby obtaining a quadratic form in unknown expansion coefficients $c_j$ of $L$, that equates to zero, $c_i^* c_j C_{ij}=0$. It turns out that, by properly choosing equations that we sum, we can achieve that the matrix $C$ is negative-definite and therefore the only solution is a trivial one $c_j=0$, i.e., no 2-site Lindblad 
operator exists for which one could solve a certain combination of equations obtained from Eq.(\ref{eq:sum00}). (iii) If we have more than one Lindblad operator we get a sum of negative-definite quadratic forms, one for each Lindblad operator, again concluding that a solution does not exist. Crucial steps are (i) and (ii), finding the right combination of equations to get a negative-definite $C$ (here a symmetry of the canonical form of $a$ (\ref{eq:canonical}) is helpful) and then proving that $C$ is actually negative-definite, except for special values of parameters. Details of the all three steps of the proof can be found in Appendix \ref{app1}. 
\end{proof}

\subsection{Non two-local operators with support on two sites}

Crucial reason why it is not possible to have a $2$-local Lindbladian that would conserve a non-trivial $2$-local operator is the connectivity of the one-dimensional lattice as reflected in the TI sum $A=\sum_j a_j$ in which $a_j$ acts only on nearest-neighbor sites $j$ and $j+1$. If one relaxes this condition then it is possible to find a 2-local Liouvillian that conserves such $A$.

As a simple example, let us define $a_{j,k}$ as an operator $a$ that acts nontrivially only on $j$-th and $k$-th sites and let $A=\sum_{j,k} a_{j,k}$. Such $A$ can in turn be conserved by a 2-local dissipator. Namely, taking a single Lindblad operator $L=\x_0\x_1+\y_0\y_1+\z_0\z_1$ and no Hamiltonian evolution, one can calculate that the stationary states of such 2-site dissipator are spanned by 
10 basis states $\{\x_0\x_1,\x_0\y_1+\y_0\x_1,\y_0\y_1,\x_0\z_1+\z_0\x_1,\y_0\z_1+\z_0\y_1,\z_0\z_1,\x_0\1_1+\1_0\x_1,\y_0\1_1+\1_0\y_1,\z_0\1_1+\1_0\z_1,\1_0\1_1 \}$. In addition, one has $\cL(\x_0\1_1)=-8(\x_0\1_1-\1_0\x_1)$ and similarly for $\y$ and $\z$. This then means that if we take for 2-local operator $a$ any permutationally invariant operator (any such state is spanned by the above basis of steady states) the corresponding $A=\sum_{j,k} a_{j,k}$ (which is not 2-local; it is though a sum of operators having support on 2 sites) will be conserved by such non-nearest-neighbor 2-site permutationaly invariant $\cL=\sum_j \cL_j$.

\subsection{Beyond two-site dissipators}
Considering that for a 1-site $a_j$ using more than 1-site Liouvillian enabled one to do more than just with 1-site Liouvillians (see Theorem (\ref{th:1Qglob}) one can wonder whether more than 2-site Liouvillians could perhaps conserve non-trivial 2-site $a_j$. We do not know a complete answer to this question; here we present result for the case that $\cL_j$ is a 3-site operator.

We can treat a 3-site $\cL_j$ using the same idea used in the 2-site case, see Theorem \ref{th:2Q2Q} and its proof. One constructs a negative matrix $C$ that is negative-definite, except at special values of parameters of a 2-site $a_j=\x_j\x_{j+1}+\mu \y_j \y_{j+1}+\nu \z_j\z_{j+1}+\1_j(h_x\x_{j+1}+h_y \y_{j+1}+h_z \z_{j+1})$. For a Liouvillian with a single Lindblad operator, now depending on $4^3-1$ complex coefficients $c_j$, global conservation of $a_j$ imposes a condition $\bm{c}^\dagger C \bm{c}=0$. We shall not write out the $63\times 63$ matrix $C$ that depends on $\mu,\nu,h_{x,y,z}$, but just state the result. It turns out that, exactly as in the 2-site case, $C$ is negative-definite except in the trivial case of $\mu=\nu=h_y=h_z=0$. Details of the proof, that is very similar to the one for a $2$-local case, can be found in Appendix~\ref{AppB}.

\section{Conclusion}
We have studied conservation laws of local translationally invariant Lindblad equations on a ring. We have solved the problem of conservation of translationally invariant $1$-local operators, showing that any such operator can be locally conserved already by a $1$-local Lindblad equation. Considering simultaneous conservation of more than one $1$-local operator we have also proved that all $1$-local operators can be globally conserved by $2$-local Lindblad equations. For $2$-local translationally invariant operators the results are quite different. We have proved that one can locally conserve all Ising-like $2$-local operators with a $2$-local Lindblad equation. On the other hand, all other $2$-local operators can not be conserved, neither locally nor globally, by any $2$-local and not even by $3$-local Lindblad equation. The problem for $p$-local Lindblad equations, with $p>3$, remains open. 

Our rigorous results in particular show that the conservation of any nontrivial interaction energy (being a $2$-local operator) is fundamentally different than conservation of $1$-local operators -- in fact, it can not be done by any sufficiently local translationally invariant Lindblad equation. Therefore, in order to conserve the energy one is forced to either relax the locality constraint, allowing for non-local Lindbladians, or, to relax translational invariance, for instance, by allowing for locally-tuned reservoirs as is often done in mesoscopic physics for electric~\cite{LB}, thermal~\cite{thermal},
and thermoelectric transport~\cite{3ter}. 

\appendix

\section{Proof of Theorem \ref{th:2Q2Q}}
\label{app1}
We shall first construct the step (i) from short description of the proof of Theorem \ref{th:2Q2Q}. We demand that the TI operator obtained from $a_j$ is a stationary state, meaning that operator Eq.~(\ref{eq:sum00}) should be satisfied, and we consider dissipator with a single 2-site Lindblad operator $L=\sum_{\bm{j}} c_{\bm{j}} \sigma_0^{j_1}\sigma_1^{j_2}$with $15$ unknown complex coefficients $c_j$ and a general 2-site Hamiltonian $h=\sum_{j_1,j_2} d_{\bm{j}} \sigma_0^{j_1}\sigma_1^{j_2}$ with $16$ unknown real coefficients $d_{\bm{j}}$ ($H=\sum_j h_j$). We can project the Eq.~(\ref{eq:sum00}) on any operator, thereby obtaining algebraic equations that the coefficients $c_{\bm{j}}$ and $d_{\bm{j}}$ should satisfy. Due to a 2-local nature of $a_j$ and $\cL_j$ and terms in Eq.~(\ref{eq:sum00}) being at most 3-local all nontrivial equations are obtained by 
projecting on all primitive 1-, 2- and 3-local operators. We shall in fact need to project only on three 2-local operators, $\x_0\x_1$, $\y_0\y_1$, $\z_0\z_1$, and on three 1-local operators $\x_0,\, \y_0$, and $\z_0$. Here and in the following we use a short notation $\cL_{mn,kl}$ for the matrix elements of a 2-local Liouvillian $\cL_j$, $\cL_{mn,kl}\equiv [\cL_j]_{mn,kl}=\langle \sigma_j^m \sigma_{j+1}^n | \cL_j(\sigma_j^k \sigma_{j+1}^l) \rangle$, for instance, $\cL_{xy,x\1}=\langle \sigma_j^{\rm x} \sigma_{j+1}^{\rm y} | \cL_j(\sigma_j^{\rm x} \1_{j+1}) \rangle$. Let us first write the equation obtained by projecting Eq.~(\ref{eq:sum00}) on $\x_0\x_1$. Because $\x_0\x_1$ is a 2-local operator and the TI sum in Eq.~(\ref{eq:sum00}) involves at most a 3-local terms, according to Lemma~\ref{lem:sum0} the coefficients in front of $\x_j\x_{j+1}$ should sum to zero, resulting in the equation,
\begin{eqnarray}
\cL_{xx,xx}&+&\cL_{\1 x,\1 x}+\cL_{x\1,x\1}+\mu\cL_{xx,yy}+\nu\cL_{xx,zz}+2h_x(\cL_{xx,\1 x}+\cL_{xx,x\1})+\nonumber \\
&+&2h_y(\cL_{xx,\1 y}+\cL_{xx,y\1})+2h_z(\cL_{xx,\1 z}+\cL_{xx,z\1})+h_x(\cL_{\1 x,\1\1}+\cL_{x\1,\1\1}) =0.
\label{eq:xx}
\end{eqnarray}
Note that we used the condition on unitality, stating that $\cL_j(\1_j\1_{j+1})=w'_j\1-\1 w'_{j+1}$ holds, where $w' \perp \1$. The last term in the above equation is in fact nonzero only if $\cL_j$ is non-unital. The equations obtained by projection on $\y_0\y_1$ and $\z_0\z_1$ are similar; they can be obtained from (\ref{eq:xx}) by appropriately permuting indices. For $\y_0\y_1$ we get
\begin{eqnarray}
\mu(\cL_{yy,yy}&+&\cL_{\1 y,\1 y}+\cL_{y\1,y\1})+\cL_{yy,xx}+\nu\cL_{yy,zz}+2h_x(\cL_{yy,\1 x}+\cL_{yy,x\1})+\nonumber \\
&+&2h_y(\cL_{yy,\1 y}+\cL_{yy,y\1})+2h_z(\cL_{yy,\1 z}+\cL_{yy,z\1})+h_y(\cL_{\1 y,\1\1}+\cL_{y\1,\1\1}) =0,
\label{eq:yy}
\end{eqnarray}
while by projecting on $\z_0\z_1$ we get
\begin{eqnarray}
\nu(\cL_{zz,zz}&+&\cL_{\1 z,\1 z}+\cL_{z\1,z\1})+\cL_{zz,xx}+\mu\cL_{zz,yy}+2h_x(\cL_{zz,\1 x}+\cL_{zz,x\1})+\nonumber \\
&+&2h_y(\cL_{zz,\1 y}+\cL_{zz,y\1})+2h_z(\cL_{zz,\1 z}+\cL_{zz,z\1})+h_z(\cL_{\1 z,\1\1}+\cL_{z\1,\1\1}) =0.
\label{eq:zz}
\end{eqnarray}
We will also need the three equations obtained by demanding that 1-local terms in Eq.~(\ref{eq:sum00}) should sum to zero. According to Lemma~\ref{lem:sum0} this for instance means that the coefficients in front of $\x_j\1_{j+1}$ and $\1_j\x_{j+1}$ should sum to zero. For $\x_j$ we therefore get the equation
\begin{eqnarray}
\cL_{\1 x,xx}&+&\cL_{x\1,xx}+\mu(\cL_{\1 x,yy}+\cL_{x\1,yy})+\nu(\cL_{\1 x,zz}+\cL_{x\1,zz})+2h_x(\cL_{\1 x,\1 x}+\cL_{x\1,x\1}+\cL_{\1 x,x\1}\cL_{x\1,\1 x})+\nonumber \\
&+&2h_y(\cL_{\1 x,\1 y}+\cL_{x\1,y\1}+\cL_{\1 x,y\1}+\cL_{x\1,\1 y})+2h_z(\cL_{\1 x,\1 z}+\cL_{x\1,z\1}+\cL_{\1 x,z\1}+\cL_{x\1,\1 z})=0.
\label{eq:x}
\end{eqnarray}
Similarly, for $\y_j$ we get,
\begin{eqnarray}
\mu(\cL_{\1 y,yy}&+&\cL_{y\1,yy})+\cL_{\1 y,xx}+\cL_{y\1,xx}+\nu(\cL_{\1 y,zz}+\cL_{y\1,zz})+2h_x(\cL_{\1 y,\1 x}+\cL_{y\1,x\1}+\cL_{\1 y,x\1}\cL_{y\1,\1 x})+\nonumber \\
&+&2h_y(\cL_{\1 y,\1 y}+\cL_{y\1,y\1}+\cL_{\1 y,y\1}+\cL_{y\1,\1 y})+2h_z(\cL_{\1 y,\1 z}+\cL_{y\1,z\1}+\cL_{\1 y,z\1}+\cL_{y\1,\1 z})=0,
\label{eq:y}
\end{eqnarray}
while for $\z_j$ we have
\begin{eqnarray}
\nu(\cL_{\1 z,zz}&+&\cL_{z\1,zz})+\cL_{\1 z,xx}+\cL_{z\1,xx}+\mu(\cL_{\1 z,yy}+\cL_{z\1,yy})+2h_x(\cL_{\1 z,\1 x}+\cL_{z\1,x\1}+\cL_{\1 z,x\1}+\cL_{z\1,\1 x})+\nonumber \\
&+&2h_y(\cL_{\1 z,\1 y}+\cL_{z\1,y\1}+\cL_{\1 z,y\1}+\cL_{z\1,\1 y})+2h_z(\cL_{\1 z,\1 z}+\cL_{z\1,z\1}+\cL_{\1 z,z\1}+\cL_{z\1,\1 z})=0.
\label{eq:z}
\end{eqnarray}
Matrix elements of $\cL_j$ are linear functions of unitary coefficients $d_{\bm{j}}$ and quadratic in dissipative Lindblad coefficients $c_{\bm{j}}$, the same also holds for all 6 equations above.

We proceed with the step (ii). First, we shall make an appropriate linear combination of previous 6 equations to remove the dependence on unitary coefficients $d_{\bm{j}}$, ending up with a single equation that is quadratic in dissipative $c_{\bm{j}}$, $c_i^* \tilde{C}_{i,j} c_j=0$. The matrix $\tilde{C}$ obtained by this procedure will not yet be negative-definite. Therefore, as a second step we shall combine it with certain equations obtained from the unitality condition (\ref{eq:unital}). To get rid of the coefficients of the Hamiltonian we observe that $d_j$ occur only in (some) off-diagonal elements of $\cL_j$. We can get all 
such coefficients to cancel each other by making a sum $E(\x\x)+\mu E(\y\y)+\nu E(\z\z)+2h_xE(\x)+2h_yE(\y)+2h_zE(\z)$, where by $E(\x\x)$ we denote Eq.~(\ref{eq:xx}) for $\x_j\x_{j+1}$, by $E(\x)$ Eq.~\ref{eq:x}, and similarly for the other 4. Packing 15 complex coefficients $c_{j}$ in a vector $\bm{c}$, the resulting equation is a quadratic form in $c_j$ only, $\bm{c}^\dagger \tilde{C} \bm{c}=0$, with a $15\times 15$ Hermitian matrix $\tilde{C}$ that depends on $\mu,\nu,h_x,h_y,h_z$ and is not yet negative-definite. We do not yet write it out. It turns out that we can get rid of all complex matrix elements of $\tilde{C}$, and at 
the same time make it also negative-definite, by combining it with quadratic forms obtained from the condition on unitality (\ref{eq:unital}). In particular, for all primitive 2-local operators, like $\x_j\x_{j+1}$, the unitality condition gives $\cL_{xx,\1\1}=0$, while for 1-site terms, like $\1_j\x_{j+1}$, we have $\cL_{\1 x,\1\1}+\cL_{x\1,\1\1}=0$. Note that the unitary part with $d_j$ is always unital and therefore these coefficients are absent from all $\cL_{nm,\1\1}$. Quadratic forms that we need are in fact not all $9+3$ equations ($9$ for 2-site operators and $3$ for 1-site), but instead only $9$ symmetric combinations
\begin{equation}
\begin{array}{rclrclrcl}
\cL_{xx,\1\1}\equiv \bm{c}^\dagger \tilde{C}^{\rm un}_{xx} \bm{c} & = & 0,& \cL_{yy,\1\1}\equiv \bm{c}^\dagger \tilde{C}^{\rm un}_{yy} \bm{c} & = & 0,& \cL_{zz,\1\1}\equiv \bm{c}^\dagger \tilde{C}^{\rm un}_{zz} \bm{c} & = & 0, \\
\cL_{zx,\1\1}+\cL_{xz,\1\1}\equiv \bm{c}^\dagger \tilde{C}^{\rm un}_{xz} \bm{c} & = & 0,& \cL_{yz,\1\1}+\cL_{zy,\1\1}\equiv \bm{c}^\dagger \tilde{C}^{\rm un}_{yz} \bm{c} & = &  0,& \cL_{xy,\1\1}+\cL_{yx,\1\1}\equiv \bm{c}^\dagger \tilde{C}^{\rm un}_{xy} \bm{c} & = & 0, \\
\cL_{\1 x,\1\1}+\cL_{x\1,\1\1}\equiv \bm{c}^\dagger \tilde{C}^{\rm un}_{x} \bm{c}&=&0,& \cL_{\1 y,\1\1}+\cL_{y\1,\1\1}\equiv \bm{c}^\dagger \tilde{C}^{\rm un}_{y} \bm{c}&=&0,& \cL_{\1 z,\1\1}+\cL_{z\1,\1\1}\equiv \bm{c}^\dagger \tilde{C}^{\rm un}_{z} \bm{c}&=&0,
\end{array}
\end{equation}
where we defined the corresponding matrices $\tilde{C}^{\rm un}$ of quadratic forms. Out of $\tilde{C}$ and $\tilde{C}^{\rm un}$ (which are purely imaginary) we can make a negative real symmetric matrix $C$,
\begin{eqnarray}
C\equiv\tilde{C}&-&(4h_x^2-\mu\nu)\tilde{C}^{\rm un}_{xx}-(4h_y^2-\nu)\tilde{C}^{\rm un}_{yy}-(4h_z^2-\mu)\tilde{C}^{\rm un}_{zz}-3h_x \tilde{C}^{\rm un}_x-3h_y\mu\tilde{C}^{\rm un}_y-3h_z \tilde{C}^{\rm un}_z-\nonumber \\
&-&4h_xh_y\tilde{C}^{\rm un}_{xy}-4h_xh_z\tilde{C}^{\rm un}_{xz}-4h_yh_z\tilde{C}^{\rm un}_{yz}.
\end{eqnarray}
Choosing an un-normalized basis of $C$ as $9$ symmetric combinations $\{\x_0\x_1,\y_0\y_1,\z_0\z_1\}$, followed by $\{\1_0\x_1+\x_0\1_1,\1_0\y_1+\y_0\1_1,\1_0\z_1+\z_0\1_1\}$ and $\{\z_0\y_1+\y_0\z_1,\z_0\x_1+\x_0\z_1,\y_0\x_1+\x_0\y_1\}$, and $6$ antisymmetric ones $\{\1_0\x_1-\x_0\1_1,\1_0\y_1-\y_0\1_1,\1_0\z_1-\z_0\1_1\}$ and $\{\z_0\y_1-\y_0\z_1,\x_0\z_1-\z_0\x_1,\y_0\x_1-\x_0\y_1\}$, $C$ has a block structure,
\begin{equation}
C=32
\left(
\begin{array}{cc}
C_1 & 0\\
0 & C_2
\end{array}\right),
\qquad
C_2=\left(\begin{array}{cc}
A & B^{\rm T} \\
B & \tilde{A}
\end{array}\right),
\quad
C_1=\left(\begin{array}{ccc}
A_1 & B_1 & B_2\\
B_1^{\rm T} & A_2 & B_3\\
B_2^{\rm T} & B_3^{\rm T} & A_3
\end{array}\right),
\label{eq:C}
\end{equation}
where $C_2$ is a $6\times 6$, $C_1$ a $9\times 9$ matrix, while all $A$ and $B$ are $3\times 3$ blocks,
\begin{eqnarray}
A&=&2\,\bm{h}\otimes\bm{h}-{\rm diag}(2h^2-\mu^2-\nu^2-\mu\nu,2h^2-1-\nu^2-\nu,2h^2-1-\mu^2-\mu), \nonumber \\
\tilde{A}&=&A-{\rm diag}(1,\mu^2,\nu^2),
\label{eq:A}
\end{eqnarray}
where $\bm{h}=(h_x,h_y,h_z)$ and $h=|\bm{h}|$, and 
\begin{equation}
B=\left( \begin{array}{ccc}
0 & -h_z(1+\mu+2\nu) & h_y(1+2\mu+\nu) \\
h_z(1+\mu+2\nu) & 0 & -h_x(2+\mu+\nu) \\
-h_z(1+2\mu+\nu) & h_x(2+\mu+\nu) & 0 
\end{array} \right),
\end{equation}
\begin{equation}
A_1=\left( \begin{array}{ccc}
-4h_y^2-4h_z^2-\mu^2-\nu^2& 4h_z^2 & 4 h_y^2 \\
4 h_z^2 & -4h_x^2-4h_z^2-1-\nu^2 & 4 h_x^2 \\
4 h_y^2 & 4 h_x^2 & -4h_x^2-4h_y^2-1-\mu^2 
\end{array} \right),
\end{equation}
\begin{equation}
A_2=A+{\rm diag}(2\mu\nu,2\nu,2\mu),
\end{equation}
\begin{equation}
A_3=A_2+4\, \bm{h}\otimes\bm{h}-{\rm diag}(1+12h_x^2,\mu^2+12h_y^2,\nu^2+12h_z^2),
\end{equation}
\begin{equation}
B_1=\left( \begin{array}{ccc}
0 & -\sqrt{2}h_y(\nu-1) & -\sqrt{2}h_z(\mu-1) \\
\sqrt{2}h_x(\mu-\nu) & 0 & \sqrt{2}h_z(\mu-1) \\
\sqrt{2}h_x(\nu-\mu) & \sqrt{2}h_y(\nu-1) & 0
\end{array} \right),
\end{equation}
\begin{equation}
B_2=\left( \begin{array}{ccc}
-\sqrt{32}h_yh_z & \sqrt{8}h_xh_z & \sqrt{8}h_xh_y \\
\sqrt{8}h_yh_z & -\sqrt{32}h_xh_z & \sqrt{8}h_xh_y \\
\sqrt{8}h_yh_z & \sqrt{8}h_xh_z & -\sqrt{32}h_xh_y
\end{array} \right),
\end{equation}
\begin{equation}
B_3=\left( \begin{array}{ccc}
0 & h_z(1+\mu-2\nu) & h_y(1+\nu-2\mu) \\
h_z(1+\mu-2\nu) & 0 & h_x(\mu+\nu-2) \\
h_y(1+\nu-2\mu) & h_x(\mu+\nu-2) & 0
\end{array} \right).
\end{equation}
We therefore obtained a quadratic form
 \begin{equation}
\bm{c}^\dagger C \bm{c}=0.
\label{eq:C0}
\end{equation}
The claim now is that $C$ (\ref{eq:C}) is negative-definite (i.e., all eigenvalues are negative), except for $\mu=\nu=h_y=h_z=0$, when it is negative semi-definite. 

Let us start by showing that $C_2$ is negative-definite except at $\mu=\nu=h_y=h_z=0$, i.e., that all eigenvalues are negative except at $\mu=\nu=h_y=h_z=0$, 
when some become $0$. One can show that either directly on $C_2$, by, for instance, evaluating determinants of the leading minors, seeing that they alternate in sign (Sylvester's criterion). This demands calculating the determinant of a $6\times 6$ matrix. Slightly less demanding procedure is by upper-bounding $C_2$ by another negative-definite matrix $C_2'$. Provided $C_2'$ has more structure 
(i.e., is more symmetric), proving its negative-definiteness could be simpler, while, because of $C_2' \ge C_2$, we would also know that $C_2$ can be possibly negative-semidefinite (some eigenvalues become zero) only at parameter values 
at which $C_2'$ becomes negative-semidefinite. By looking at the form of $C_2$ (\ref{eq:A}) we note that $\tilde{A}$ is almost equal to $A$. This leads us to make $C_2'=C_2+{\rm diag}(\mu\nu,\nu,\mu,1+\mu\nu,\mu^2+\nu,\nu^2+\mu)$. For $\mu,\nu \in [0,1]$ the diagonal elements that we added are non-negative and therefore $C_2' \ge C_2$ (i.e., their difference is a positive-semidefinite matrix). 
To show that $C_2'$ is negative-definite, except at special parameter values, we have to show that all eigenvalues are negative. Because in $C_2'$ the two diagonal $3\times 3$ blocks are now equal, while the off-diagonal block $B$ is an antisymmetric matrix, each eigenvalue of $C_2'$ is at least doubly degenerate. This can be seen by noting that if $(c_1,c_2,c_3,c_4,c_5,c_6)$ is an eigenvalue of $C_2'$, so is $(c_4,c_5,c_6,-c_1,-c_2,-c_3)$.  Because of the double degenerate 
eigenvalues the characteristic polynomial of $C_2'$ is effectively of degree $3$ instead of general $6$, being equal to $\lambda^3+b_2 \lambda^2+b_1\lambda+b_0$, with $b_2=4(1+\mu^2+\nu^2+2(h_x^2+h_y^2+h_z^2))$, while $b_1$ and $b_0$ are more complicated 
polynomials in parameters and we do not write them out. By Descartes' rule of signs we know that all roots are negative iff all coefficients $b_{0,1,2}$ are positive. For $b_2$ we readily see that $b_2 \ge 4$. Similarly one can show that $b_1 \ge 4$, while $b_0 \ge 0$, with the zero being attained only at $\mu=\nu=h_y=h_z=0$. $C_2'$ is therefore negative-definite, except at $\mu=\nu=h_y=h_z=0$, 
where it is negative-semidefinite. $C_2$ is therefore also negative-definite, except possibly at $\mu=\nu=h_y=h_z=0$. Calculating explicitly eigenvalues of $C_2$ 
at $\mu=\nu=h_y=h_z=0$ we see that indeed one is zero while the other $5$ are negative. $C_2$ is therefore negative-definite except at $\mu=\nu=h_y=h_z=0$ when it is negative-semidefinite.

Let us now show that $C_1$ is also negative-semidefinite, except at $\mu=\nu=h_y=h_z=0$. Because of its larger size things are more involved than with $C_2$ (using computer algebra is advisable), but the general idea is the same. We first construct a simpler upper-bound matrix $C_1'=C_1+{\rm diag}(\mu^2+\nu^2,1+\nu^2,1+\mu^2,0,0,0)$, $C_1' \ge C_1$. $C_1'$ is simpler because we can guess one 
eigenvector $\bm{v}=(1,1,1,0,0,0,0,0,0)$ with the corresponding eigenvalue $\lambda=0$. Let us denote the $8$-dimensional subspace orthogonal to $\bm{v}$ by ${\cal H}_\perp$, and the total $9$ dimensional space by ${\cal H}$. Any vector $\psi \in {\cal H}$ can be written as $\psi=\alpha {\bm v}+\psi_\perp$, where 
$\psi_\perp \in {\cal H}_\perp$. We have $ \psi^\dagger C_1 \psi \le \psi^\dagger C_1' \psi=\psi_\perp^\dagger C_1' \psi_\perp$. The claim now is that $C_1'$ is 
negative-definite on ${\cal H}_\perp$, meaning that $\psi_\perp^\dagger C_1' \psi_\perp <0$, except at special points. This means that, if $\psi_\perp \neq 0$ one also has $\psi^\dagger C_1 \psi <0$. In addition, one can easily check that $\bm{v}^\dagger C_1 \bm{v}=-2(1+\mu^2+\nu^2)<0$, so that $C_1$ is negative-definite, except 
possibly at special points at which $C_1'$ is not negative-definite. We therefore proceed by considering $C_1'$ on ${\cal H}_\perp$, showing that it is negative-definite, except at special points. $C_1'$ defined on ${\cal H}_\perp$ is an $8\times 8$ matrix $\tilde{C}_1'$. We shall show that $\tilde{C}_1'$ is 
negative-definite by using Sylvester's criterion, stating that a matrix is negative-definite iff the determinants of $[\tilde{C}_1']_j, j=1,\ldots,8$, alternate in sign, and $\det{[\tilde{C}_1']_1}<0$, where $[\tilde{C}_1']_j$ is a matrix obtained by 
taking the first $j$ rows and columns of $\tilde{C}_1'$. First, we observe that all $8$ determinants are polynomials in $\bm{h}$ with only even powers of $h_{x,y,z}$ being present, and that the coefficients in front of each monomial in 
magnetic fields is a polynomial in $\mu$ and $\nu$. One can show that all these coefficients are non-positive for $\det{[\tilde{C}_1']_j}$ and odd $j$, and non-negative for $\det{[\tilde{C}_1']_j}$ and even $j$. Therefore, except at points at which some of these coefficients become zero, also the whole determinant has negative/positive sign and $\tilde{C}_1'$ is negative-definite. Special point for 
which some coefficients are zero is always $\mu=\nu=0$. This is the only possibility at which $\tilde{C}_1'$ can become negative-semidefinite. Note that so far we did not care about $h_{x,y,z}$ at which the determinants are zero because 
at the end we will anyway have to check the negativity of $C_1$. For $\mu=\nu=0$ the matrix $C_1$ is sufficiently simple so that we can check its eigenvalues 
directly. Calculating the characteristic polynomial we can see that all coefficients are strictly positive numbers, except the two coefficients in front of $\lambda^0$ and $\lambda$, that are both $0$ for $h_y=h_z=0$. This shows that $C_1$ is negative-definite, except for $\mu=\nu=h_y=h_z=0$ when it is negative-semidefinite (two eigenvalues are zero).

We have proved that $C$ is negative-definite, except for $\mu=\nu=h_y=h_z=0$. Apart from that special point Eq.~(\ref{eq:C0}) therefore has no non-zero solutions for $\bm{c}$ and, correspondingly, a 2-site dissipator with a single Lindblad operator that would globally conserve such $a$ does not exist.

Step (iii), proving that the same conclusion holds also for a Liovillian with more than one Lindblad operator, is now easy. Let us write these Lindblad operators as $L^{(k)}=\sum_{\bm{j}} c^{(k)}_{\bm{j}} \sigma_0^{j_1}\sigma_1^{j_2}$. Due to linearity we would, instead of Eq.~(\ref{eq:C0}), get equation $\sum_k \bm{c}^{(k)\dagger} C \bm{c}^{(k)}=0$, where $C$ is the same as for the case of a single $L$. Due to negative-definiteness of $C$ there would again be no solutions. 

\section{Sketch of the proof for a $3$-local Liouvillian}
\label{AppB}

We obtain a real symmetric matrix $C$ by first forming $\tilde{C}$ from equations $E(\x\x)+\mu E(\y\y)+\nu E(\z\z)+h_x E(\x)+h_y E(\y)+ h_z E(\z)$, and 
then $C=\tilde{C}+(\mu\nu-h_x^2)\tilde{C}^{\rm un}_{xx}+(\nu-h_y^2)\tilde{C}^{\rm un}_{yy}+(\mu-h_z^2)\tilde{C}^{\rm un}_{zz}-2h_x\tilde{C}^{\rm un}_{x}-2\mu h_y\tilde{C}^{\rm un}_{y}-2\nu h_z\tilde{C}^{\rm un}_{z}-h_xh_y\tilde{C}^{\rm un}_{xy}-h_xh_z \tilde{C}^{\rm un}_{xz}-h_yh_z\tilde{C}^{\rm un}_{yz}-(1+h_x^2)\tilde{C}^{\rm un}_{x\1 x}-(\mu^2+h_y^2)\tilde{C}^{\rm un}_{y\1 y}-(\nu^2+h_z^2)\tilde{C}^{\rm un}_{z\1 z}-h_xh_y\tilde{C}^{\rm un}_{x\1 y}-h_xh_z \tilde{C}^{\rm un}_{x\1 z}-h_yh_z\tilde{C}^{\rm un}_{y\1 z}-2h_x \tilde{C}^{\rm un}_{xxx}-2\mu h_y\tilde{C}^{\rm un}_{yyy}-2\nu h_z\tilde{C}^{\rm un}_{zzz}-\mu h_x \tilde{C}^{\rm un}_{xyy}-\nu h_y \tilde{C}^{\rm un}_{yzz}-\nu h_x \tilde{C}^{\rm un}_{xzz}-\mu h_z \tilde{C}^{\rm un}_{zyy}-h_y \tilde{C}^{\rm un}_{yxx}-h_z \tilde{C}^{\rm un}_{zxx}+h_y \mu \tilde{C}^{\rm un}_{\1\1 x}+h_z\nu\tilde{C}^{\rm un}_{\1\1 z}+h_x \tilde{C}^{\rm un}_{\1\1 x}$, where $\tilde{C}^{\rm un}$ are 
equations obtained from the unitality condition for $3$-local Liouvillians (\ref{eq:unitality3}). Because of a large size of $C$ it is difficult to analytically prove that it is negative-definite for all values of parameters. One can analytically check though that this is indeed the case for specific values of parameters or when some parameters are zero and $C$ simplifies: we have analytically checked that such $C$ is negative-definite for the XYZ type of $a_j=\x_j\x_{j+1}+\mu \y_j\y_{j+1}+\nu \z_j\z_{j+1}$ ($\mu,\nu \in [0,1]$, apart from $\mu=\nu=0$) by using Descartes' rule on a characteristic polynomial (due to symmetries and degeneracies one can reduce the eigenvalue problem to polynomials of smaller degree: four different polynomials are of degree 6, eleven different are of degree 3, while 6 eigenvalues can be explicitly expressed); similarly, one can analytically check that $C < 0$ for the Ising like $a_j=\x_j\x_{j+1}+h_y \y_{j+1}+h_z \z_{j+1}$ ($h_y,h_z \in \mathbbm{R} \setminus
  \{0\}$) (16 explicit eigenvalues, three different polynomials of degree 3, each occurring three times, two of degree 3, each two times, two different of degree 3, two different of degree 6, and two different of degree 7); 
for the XXZ Heisenberg type $a_j=\x_j\x_{j+1}+\y_j\y_{j+1}+\nu \z_j\z_{j+1}+h_z \z_{j+1}$ ($h_z \in \mathbbm{R}$, $\nu \in [0,1]$) (19 explicit eigenvalues, two different polynomials of degree 4, two same polynomials of degree 4, two same of degree 6, two same of degree 8); for the XX model in a transverse field, $a_j=\x_j\x_{j+1}+\y_j\y_{j+1}+h_x \x_{j+1}$ (4 explicit eigenvalues, one polynomial of degree 4, one of degree 5, three different of degree 6, two different of degree 10, and one of degree 12).
For other generic values of parameters we have scanned parameter ranges and numerically checked that $C$ is negative-definite, except at the mentioned parameters. Therefore, using a 3-site Liouvillian $\cL_j$ one can globally conserve only those 2-site $a_j$ that can already be (locally) conserved by a 2-site $\cL_j$.


\section*{Acknowledgements}
G.B. and G.C. acknowledge the support by 
MIUR-PRIN project \emph{Collective quantum phenomena: 
From strongly correlated systems to quantum simulators} 
and by Regione Lombardia. M.\v Z. would like to thank Universit\`a degli Studi dell'Insubria for hospitality during his stay there.

\end{document}